\documentclass[runningheads]{llncs}
\usepackage[T1]{fontenc}
\usepackage{graphicx}
\usepackage{tikz,pgfplots}
\usepackage{algorithm}
\usepackage{algorithmicx}
\usepackage{algpseudocode}
\usepackage{graphicx}
\usepackage{hyperref}
\usepackage{amsmath}
\usepackage{float}
\usepackage{amssymb}
\usepackage{tabularx}
\usepackage{tikz}
\usepackage{makecell}
\usepackage{array}
\usepackage{enumitem}
\usepackage{comment}
\usepackage{graphicx}
\usepackage{braket}
\usetikzlibrary{positioning,arrows.meta}

\usepackage{float} 
\usepackage{listings}
\floatstyle{ruled}
\newfloat{listing}{htbp}{lop}
\floatname{listing}{Listing}

\begin{document}
\title{Reversible Lifetime Semantics\\for Quantum Programs}
\titlerunning{Reversible Lifetime Semantics for Quantum Programs}

\author{Simone Faro\inst{1}\orcidID{0000-0001-5937-5796} \and
Francesco Pio Marino\inst{1,2}\orcidID{0000-0003-4722-9542} \and
Gabriele Messina\inst{1}\orcidID{0009-0004-9926-3691}}

\institute{Università di Catania, Catania, Italy \and Université de Rouen Normandie, Rouen, France}

\authorrunning{S. Faro, F.P. Marino and G. Messina}

\maketitle              
\begin{abstract}
Reversible computation requires that intermediate data be explicitly undone rather than discarded. In quantum programming, this principle appears as uncomputation, usually treated as a technical cleanup mechanism. We instead present uncomputation as a semantic foundation. In the Qutes language, we introduce a formal model of \emph{Scope-Bounded Liveness-Guided Uncomputation}, where lexical scope bounds variable lifetime and static liveness and entanglement analysis determine the earliest safe reclamation point. We define semantic lifetime and a Restoration Invariant ensuring that temporary quantum information disappears once it becomes semantically irrelevant. We prove compositional correctness under nested scopes and show that early reclamation can reduce circuit depth by avoiding critical-path overhead and can bound peak live qubits through disciplined ancilla reuse. Finally, we show that parameter passing semantics emerges from the same lifetime discipline, with pass-by-value and pass-by-reference corresponding to different lifetime boundaries, and we characterize the constraints (irreversibility, persistent entanglement, and aliasing) under which automatic uncomputation must be restricted. 

\keywords{Reversible Computation \and Automatic Uncomputation \and 
Quantum Programming Languages \and Parameter Passing Semantics.}
\end{abstract}
\section{Introduction}

Reversible computation establishes that information cannot be erased without physical cost, as formalized by Landauer and subsequent foundational work \cite{landauer1961,bennett1973,lecerf1963,toffoli1980}. Quantum circuits are unitary at the gate level, but quantum \emph{programs} routinely introduce temporary data through ancilla allocation, arithmetic subroutines, and structured control flow. Unless explicitly undone, such temporary data may remain entangled with live variables, obstructing interference and compromising algorithmic primitives such as amplitude amplification \cite{grover}. 

Uncomputation restores temporary qubits to their initial state, typically $\ket{0}$, eliminating residual entanglement and preserving reversibility. In low-level circuits this is achieved by explicitly applying inverse segments. In high-level programs, however, manual inversion becomes brittle and non-modular: inverse operations may be scattered across abstractions or interleaved with control flow.

Several approaches address this problem. Silq enforces safe uncomputation through a type discipline \cite{silq}. Unqomp and Reqomp synthesize cleanup circuits and optimize space usage at the compiler level \cite{unqomp,reqomp}. Modular synthesis techniques further support compositional uncomputation \cite{venev2024modular}. Despite their differences, these approaches largely treat uncomputation as a transformation applied after the forward computation has been specified.

We adopt a different perspective. In this work, and in its concrete realization within the high-level quantum programming language \emph{Qutes}\footnote{\emph{Qutes} is a high-level quantum programming language designed to support structured control flow, ownership-aware quantum data, and lifetime-guided uncomputation. Documentation and examples are available at \texttt{https://qutes-lang.org/}.}, uncomputation is governed by \emph{semantic lifetime}. We introduce \emph{Scope-Bounded Liveness-Guided Uncomputation}, a discipline in which lexical scope bounds variable lifetime, while static liveness and entanglement analysis determine the earliest semantically safe reclamation point. Temporary quantum data must satisfy a \emph{Restoration Invariant}: once a variable becomes semantically irrelevant, it must be restored and released.

Within Qutes, lifetime boundaries are first-class structural elements of the language, enabling static reasoning about ownership, aliasing, and entanglement before circuit synthesis is finalized. This model yields structural consequences at both circuit and language levels. Early reclamation is compositional under nested scopes, may avoid unnecessary critical-path accumulation, and bounds peak live width through disciplined ancilla reuse. At the language level, parameter passing semantics emerges from lifetime boundaries: parameters whose lifetime ends at function exit satisfy the Restoration Invariant (observational pass-by-value), whereas explicitly extended lifetimes induce pass-by-reference behavior. We further characterize the precise constraints under which automatic reclamation is restricted, namely irreversibility, persistent entanglement, and aliasing.

Throughout the paper we assume familiarity with the standard circuit model of quantum computation. We consider unit-cost one- and two-qubit gates and measure complexity in terms of circuit depth and size \cite{NielsenChuang}. Basic notions of quantum computation are not reviewed for space reasons. Within this setting, we show that lifetime-based uncomputation, as realized in Qutes, is not a post-processing optimization, but a compositional semantic principle that unifies reversible reasoning, resource management, and high-level quantum programming.

\section{Motivation, Related Work, and Overview of Qutes}

Ancillary qubits are unavoidable in non-trivial quantum programs. They arise in arithmetic routines, structured control flow, and intermediate data storage. If not properly restored, ancillas may remain entangled with live outputs, thereby altering observable behavior and preventing interference-based primitives such as amplitude amplification \cite{grover}. The management of temporary quantum data is therefore not merely an implementation concern, but a semantic requirement of reversible computation.

In low-level circuit descriptions, uncomputation is achieved by explicitly appending inverse segments. At higher abstraction levels, however, inverse operations may be distributed across control structures and modular boundaries. This has motivated language- and compiler-level support for automatic cleanup.

Several approaches exist. Silq enforces safe uncomputation via a type system that guarantees restoration of temporary values \cite{silq}. Q\# and Quipper provide explicit adjoint generation mechanisms and structured compute--use--uncompute patterns \cite{qsharp,Quipper}. Compiler-based techniques such as Unqomp and Reqomp synthesize inverse circuits and optimize space usage after the forward circuit has been constructed \cite{unqomp,reqomp}. More recent modular synthesis approaches support compositional cleanup at the circuit level \cite{venev2024modular}.

Despite their differences, these systems largely treat uncomputation as:
(i) a correctness device, and
(ii) a transformation applied to a computation whose dataflow has already been fixed.

In contrast, Qutes is designed so that variable lifetime and ownership are explicit at the language level \cite{FaroMM25,Qutes_Journal}. It is a high-level quantum programming language based on structured control flow, lexical scopes, and array-oriented quantum data. Temporary registers are introduced within syntactic regions whose boundaries delimit their intended lifetime. This structural information enables static reasoning about liveness and entanglement before circuit synthesis is finalized.

The key distinction is therefore semantic rather than purely operational: in Qutes, uncomputation is governed by lifetime analysis derived from scope and liveness, rather than by explicit adjoint blocks or by post-hoc circuit rewriting.

\medskip

\begin{table}[t]
\centering
\caption{Comparison of uncomputation mechanisms across languages and frameworks.}
\label{tab:comparison}
{\scriptsize
\begin{tabular}{lllll}
\hline
\textbf{System} 
& \textbf{Automatic} 
& \textbf{Scope-aware} 
& \textbf{Lifetime-based} 
& \textbf{Early Reclaim} \\
\hline
&&&&\\[-0.2cm]
Silq \cite{silq} 
& Type-enforced 
& Partial 
& No explicit lifetime model 
& No \\[0.1cm]
Q\# \cite{qsharp} 
& Adjoint blocks 
& No 
& No 
& No \\[0.1cm]
Quipper \cite{Quipper} 
& Adjoint generation 
& No 
& No 
& No \\[0.1cm]
Unqomp \cite{unqomp} 
& Compiler synthesis 
& No 
& No 
& Limited \\[0.1cm]
Reqomp \cite{reqomp} 
& Space-aware synthesis 
& No 
& No 
& Limited \\
&&&&\\[-0.2cm]
\hline
&&&&\\[-0.2cm]
Qutes (this work) 
& Yes 
& Yes 
& Yes 
& Yes \\
&&&&\\[-0.2cm]
\hline
\end{tabular}
}
\end{table}

Table~\ref{tab:comparison} summarizes the main differences between representative systems and Qutes with respect to uncomputation support.
Existing systems either rely on explicit adjoint constructs or synthesize cleanup segments once the forward circuit has been constructed. Qutes differs in that lifetime boundaries are first-class structural elements of the language. This enables liveness-guided reclamation within nested scopes, potentially reducing both critical-path accumulation and peak live width.

Beyond quantum-specific settings, the proposed lifetime discipline may be viewed as a resource-aware refinement of classical reversible computation. While Bennett-style cleanup restores global state after a forward phase, lifetime-guided reclamation enforces local restoration at semantic boundaries. In this sense, the approach can be interpreted as introducing a structured, region-like discipline for reversible resource management, aligned with the foundational principles of logical reversibility but tailored to quantum entanglement and ownership constraints.
The remainder of the paper formalizes this lifetime-based discipline and studies its circuit-level and semantic consequences.


\section{Automatic Uncomputation in Qutes}

This section formalizes \emph{Scope-Bounded Liveness-Guided Uncomputation}. 
We introduce a dependence-based program model and define liveness and semantic lifetime for quantum variables. These notions determine when a temporary register can be safely restored and released, and allow us to analyze the impact of early reclamation on circuit depth and width. We further study compositionality under nested scopes.

\smallskip

As a running example (Fig.~\ref{fig:running-example}), consider a function 
\texttt{compute} with inputs $x_1,x_2$ and outputs $y_1,y_2$, using temporaries $t_1,\dots,t_4$. 
At the circuit level, the abstract Qutes procedures $f,g,h,k$ correspond to unitary blocks $U_f,U_g,U_h,U_k$. 
The first segment prepares $(t_1,t_2)$ from $x_1$ and transfers their contribution to $y_1$; the second segment prepares $(t_3,t_4)$ from $(x_1,x_2)$ and transfers their contribution to $y_2$, followed by $k$ on $y_2$. 
Assuming outputs initialized to $\ket{0}$, the function computes
\[
y_1 = f(x_1) \oplus g(x_1),
\qquad
y_2 = k\big(h(x_1)\oplus h(x_2)\big).
\]

The example also illustrates the classical compute--then--uncompute strategy, where adjoint blocks are appended after the forward computation. In contrast, our analysis will show how reclamation can be driven by semantic lifetime rather than by global circuit termination. The example is intentionally minimal yet captures dependence chains, entanglement propagation, and temporaries whose last syntactic use differs from their last semantic influence.

\begin{figure}[!t]
\centering
{\footnotesize
\begin{tabular}{ll}
\hline
\multicolumn{2}{l}{\texttt{qubit[] compute(qubit $x_1$, qubit $x_2$, qubit $y_1$, qubit $y_2$) \{}}\\
$p_0$. & \quad \texttt{qubit $t_1$, $t_2$, $t_3$, $t_4$;}\\
$p_1$. & \quad \texttt{f($x_1$,$t_1$);}\\
$p_2$. & \quad \texttt{g($x_1$,$t_2$);}\\
$p_3$. & \quad \texttt{CNOT $t_1$,$t_2$;}\\
$p_4$. & \quad \texttt{CNOT $t_2$,$y_1$;}\\
$p_5$. & \quad \texttt{h($x_1$,$t_3$);}\\
$p_6$. & \quad \texttt{h($x_2$,$t_4$);}\\
$p_7$. & \quad \texttt{CNOT $t_3$,$y_2$;}\\
$p_8$. & \quad \texttt{CNOT $t_4$,$y_2$;}\\
$p_9$. & \quad \texttt{k($y_2$);}\\
$p_{10}$. & \quad \texttt{return [$y_1$, $y_2$];}\\
 & \texttt{\}}\\
\end{tabular}
}
\vspace{0.5em}
\includegraphics[width=0.85\textwidth]{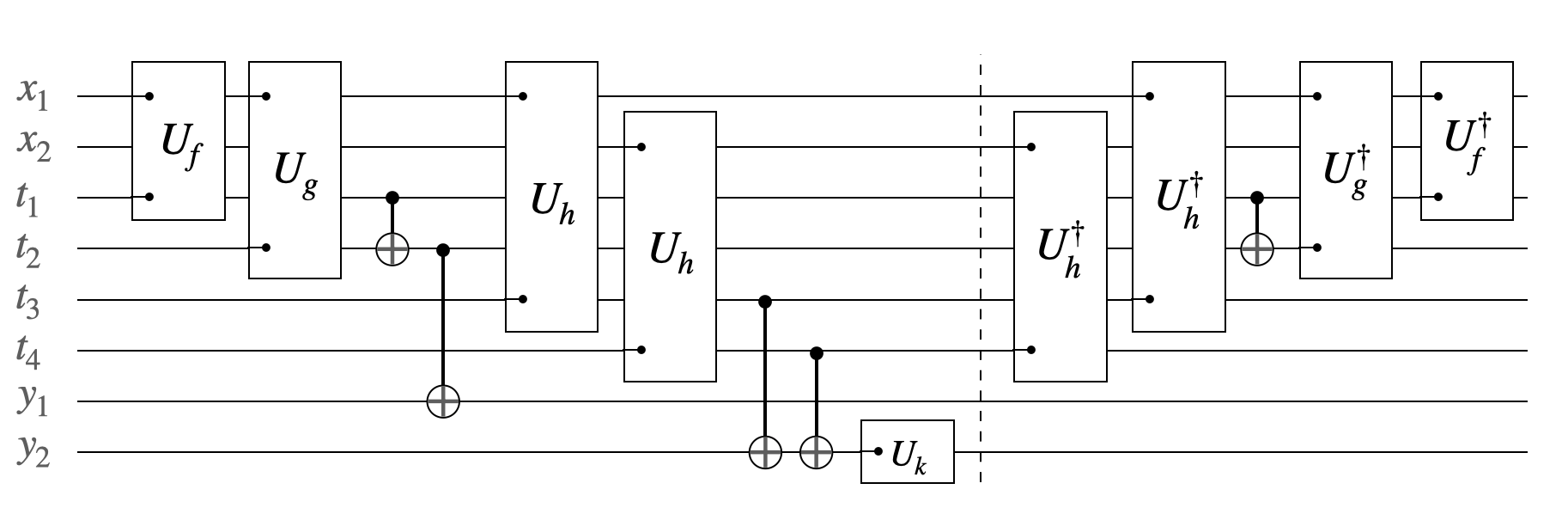}
\caption{Circuit-level realization of the running example. The forward computation (left) implements the unitary blocks $U_f$, $U_g$, $U_h$, and $U_k$, while the rightmost segment illustrates the classical compute--then--uncompute strategy in which adjoint blocks are appended at the end of the circuit.}\label{fig:running-example}
\end{figure}


\subsection{Formal Model of Lifetime-Guided Uncomputation}

We formalize liveness and semantic lifetime in terms of dependence and entanglement relations between unitary segments.

Let a compiled Qutes program be represented as a sequence of unitary blocks
\[
C = (U_1, U_2, \dots, U_n),
\]
where each $U_i$ acts on a subset of quantum variables.
Program points are indexed by $1 \le i \le n$, with point $i$ denoting the state immediately after $U_i$.

Our formal development abstracts away from surface syntax and defines lifetime semantics directly at the level of compiled unitary sequences. This abstraction should be understood as a semantic contract between the high-level language and its circuit-level realization: any Qutes program is assumed to admit a unitary decomposition consistent with lexical scope and ownership constraints. The results established below therefore characterize the semantic discipline that the compiler must enforce, independently of concrete syntactic constructs.

\begin{definition}[Data Dependence Graph]
Let $C = (U_1,\dots,U_n)$. The \emph{data dependence graph}
$G_D = (V,E_D)$ is defined as follows:
\begin{itemize}
\item $V = \{1,\dots,n\}$ consists of program points.
\item $(i \to j) \in E_D$ with $i<j$ if $U_j$ depends on $U_i$
      through a shared quantum variable.
\end{itemize}
In particular, if a variable is written by $U_i$ and later read or
modified by $U_j$, then $(i \to j) \in E_D$.
\end{definition}

The graph $G_D$ captures functional data flow and induced ordering constraints.
Under optimal scheduling, the circuit depth equals the length of the longest directed path in $G_D$.

In the running example (Fig.~\ref{fig:running-example}), the corresponding dependence graph
(Fig.~\ref{fig:entanglement-graph-example} on the left) makes explicit how temporary registers propagate
their influence to the outputs through controlled interactions. 
In particular, dependencies induced by multi-qubit gates extend the chain of influence from temporaries to live outputs.

\medskip

\begin{definition}[Entanglement Graph]
At program point $p$, the \emph{entanglement graph}
$G_E(p) = (Q, E_E(p))$ is an undirected graph where:
\begin{itemize}
\item $Q$ is the set of quantum variables,
\item $(q_i,q_j) \in E_E(p)$ if previous unitary operations
      may have induced non-separable correlations
      between $q_i$ and $q_j$.
\end{itemize}
For compilation, $G_E(p)$ is conservatively approximated by
tracking multi-qubit gates: whenever a gate jointly acts on two variables,
an undirected edge is introduced between them.
\end{definition}

In Fig.~\ref{fig:entanglement-graph-example} (on the right), the entanglement graph of the running example
illustrates how controlled operations introduce correlations that may extend
the semantic influence of temporary variables beyond their last syntactic use.
The entanglement relation will be used to conservatively propagate liveness.

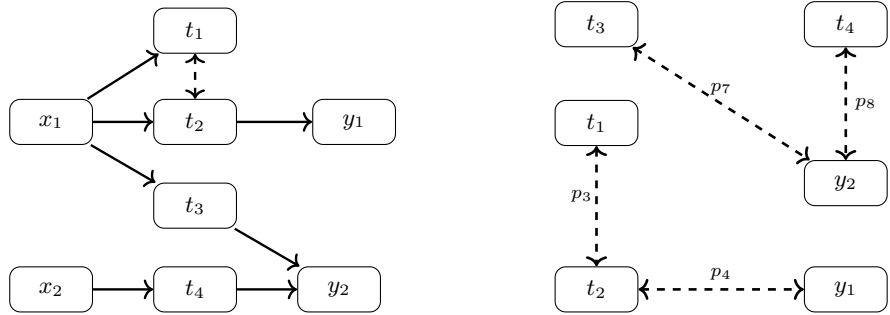
\begin{figure}[t]
\centering
\begin{tikzpicture}[
  node distance=1.6cm and 1.8cm,
  var/.style={draw, rounded corners, minimum width=1.1cm, minimum height=0.6cm},
  data/.style={->, line width=0.9pt},
  ent/.style={<->, dashed, line width=0.9pt},
  font=\small
]

\node[var] (x1) {$x_1$};
\node[var, right=0.8cm of x1] (t2) {$t_2$};
\node[var, above=6mm of t2] (t1) {$t_1$};
\node[var, below=5mm of t2] (t3) {$t_3$};
\node[var, below=5mm of t3] (t4) {$t_4$};

\node[var, left=0.8cm of t4] (x2) {$x_2$};
\node[var, right=1cm of t2] (y1) {$y_1$};
\node[var, right=0.8cm of t4] (y2) {$y_2$};

\draw[data] (x1) -- (t1);
\draw[data] (x1) -- (t2);
\draw[data] (x1) -- (t3);
\draw[data] (x2) -- (t4);

\draw[data] (t2) -- (y1);
\draw[data] (t3) -- (y2);
\draw[data] (t4) -- (y2);

\draw[ent] (t1) -- (t2);


\end{tikzpicture}\hspace{2cm}
\begin{tikzpicture}[
  node distance=1.6cm and 2.2cm,
  var/.style={draw, rounded corners, minimum width=1.1cm, minimum height=0.6cm},
  ent/.style={<->, dashed, line width=1.0pt},
  font=\small
]

\node[var] (t1) {$t_1$};
\node[var, below=of t1] (t2) {$t_2$};
\node[var, right=of t2] (y1) {$y_1$};

\node[var, above=0.8cm of y1] (y2) {$y_2$};
\node[var, above=1.5cm of y2] (t4) {$t_4$};
\node[var, left=of t4] (t3) {$t_3$};

\draw[ent] (t1) -- node[above left=-2pt and -2pt] {\scriptsize $p_3$} (t2);
\draw[ent] (t2) -- node[above] {\scriptsize $p_4$} (y1);

\draw[ent] (t3) -- node[above] {\scriptsize $p_7$} (y2);
\draw[ent] (t4) -- node[right] {\scriptsize $p_8$} (y2);

\end{tikzpicture}
\caption{(on the left) Data Dependence Graph for the example of Fig. \ref{fig:running-example}. Solid arrows denote functional data dependencies, while dashed bidirectional edges represent entanglement-induced dependencies.(on the right) Entanglement Graph for the example: dashed edges connect variables that may become entangled through controlled operations.}
\label{fig:entanglement-graph-example}
\end{figure}


\begin{definition}[Liveness]
A quantum variable $v$ is \emph{live} at program point $p$
if there exists a future operation $U_j$, with $j > p$, such that:
\begin{enumerate}
\item $v$ participates directly in $U_j$, or
\item there exists a variable $w$ such that
      $(v,w) \in E_E(p)$ and $w$ is live at $p$.
\end{enumerate}
We denote this predicate by $\mathrm{live}(v,p)$.
\end{definition}

Condition (2) propagates liveness through entanglement, ensuring that
a variable remains live as long as it is correlated with live data.

In the example of Fig. \ref{fig:running-example}, the liveness relation propagates through both data dependencies and entanglement edges. After $p_3$, the gate $\mathrm{CX}(t_1,t_2)$ makes $t_1$ and $t_2$ mutually dependent, and the subsequent gate $\mathrm{CX}(t_2,y_1)$ at $p_4$ connects this pair to the live output $y_1$. Since $y_1$ remains live until the end of the computation, liveness conservatively propagates back to both $t_2$ and $t_1$. Similarly, $t_3$ and $t_4$ become live at $p_5$ and $p_6$, and remain live after their interaction with $y_2$ at $p_7$ and $p_8$, as $y_2$ is used at $p_9$. This illustrates how entanglement-induced dependencies extend the live range of temporary variables beyond their last syntactic use.


\begin{definition}[Semantic Lifetime]
Let $v$ be a quantum variable declared within a lexical scope $S$.
The \emph{semantic lifetime} of $v$ is the maximal interval
\[
[p_{\min}, p_{\max}]
\]
such that $\mathrm{live}(v,p)$ holds for every program point
$p \in [p_{\min}, p_{\max}]$, where
\[
p_{\min} = \min \{\, p \mid \mathrm{live}(v,p) \,\}, 
\qquad
p_{\max} = \max \{\, p \mid \mathrm{live}(v,p) \,\}.
\]

The lexical scope of $v$ provides a syntactic upper bound on
$p_{\max}$ and a lower bound on possible allocation, but the semantic
lifetime may begin strictly after declaration and terminate strictly
before the end of the scope.

This lifetime is computed with respect to the entanglement-based
liveness relation and therefore constitutes a conservative
approximation of semantic relevance.
\end{definition}


Observe that, throughout the formal development, liveness and semantic lifetime are defined with respect to quantum resources rather than syntactic identifiers. If multiple program variables alias the same underlying quantum register, they are treated as a single resource for the purposes of liveness analysis. Formally, if $v_1, \dots, v_k$ reference the same physical register $R$, then
\[
\mathrm{live}(R,p) \;\text{holds iff}\; \exists i \; \mathrm{live}(v_i,p),
\]
and the semantic lifetime of $R$ is the union of the lifetimes of all its aliases. Safe reclamation is therefore permitted only when the resource itself is no longer live.


\begin{definition}[Output-Isolable Subcircuit]
Let $U$ be the unitary transformation corresponding to the
subcomputation between program points $p_a$ and $p_b$.
Let $L$ be the set of live output variables at $p_b$.
The subcircuit is said to be \emph{output-isolable} with respect to a
set of temporary variables $T$ if there exists a decomposition
\[
U = U_L \circ V_T
\]
such that:
\begin{itemize}
\item $U_L$ acts only on variables in $L$,
\item $V_T$ acts only on variables in $T$,
\item and $V_T$ is invertible without altering the reduced state of $L$.
\end{itemize}
\end{definition}

Output-isolability is a semantic condition rather than a purely syntactic one. In general, determining whether a unitary admits such a decomposition may require global reasoning about entanglement structure. In practice, the static analysis implemented in Qutes provides a conservative approximation: whenever isolability cannot be certified, automatic reclamation is disabled. The formal results below therefore assume isolability as a semantic property, while acknowledging that its static detection is necessarily incomplete.

\paragraph{Remark.}
Entanglement with live variables implies liveness and therefore
extends the semantic lifetime conservatively.
However, entanglement alone does not necessarily prevent safe
uncomputation.
If a subcircuit is output-isolable, the temporary variables involved
may admit a valid early reclamation even if they are entangled
according to the static liveness analysis.
The entanglement-based lifetime should therefore be interpreted as a
sound but not complete approximation of semantic irreversibility.

In the example of Fig. \ref{fig:running-example}, the entanglement-based liveness analysis yields a conservative semantic lifetime in which $t_1$ becomes live at $p_1$ and, after the interaction $\mathrm{CX}(t_1,t_2)$ at $p_3$ and $\mathrm{CX}(t_2,y_1)$ at $p_4$, remains live through the end of the computation due to entanglement propagation with the live output $y_1$. Similarly, $t_2$ becomes live at $p_2$ and is conservatively live until the end. Variables $t_3$ and $t_4$ become live at $p_5$ and $p_6$, respectively, and remain live until the end because they are entangled with the live output $y_2$ via $p_7$ and $p_8$, and $y_2$ is later consumed at $p_9$.

However, after $p_4$ the subcomputation that prepares $(t_1,t_2)$ is output-isolable: the only operation involving the output $y_1$ is $\mathrm{CX}(t_2,y_1)$, while the preceding history on $(t_1,t_2)$ consists of $p_1$--$p_3$. Therefore, immediately after $p_4$ the compiler may insert the adjoint sequence $\mathrm{CX}(t_1,t_2)$, $g^\dagger(x_1,t_2)$, $f^\dagger(x_1,t_1)$, which restores $t_1$ and $t_2$ to their initial states without affecting the final value of $y_1$. As a consequence, under early uncomputation the effective semantic lifetimes of $t_1$ and $t_2$ shrink to $[p_1,p_4]$ and $[p_2,p_4]$, respectively, while the lifetimes of $t_3$ and $t_4$ remain $[p_5,p_9]$ and $[p_6,p_9]$.

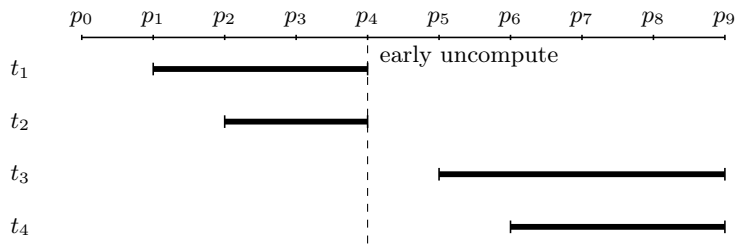
\begin{figure}[t]
\centering
\begin{tikzpicture}[
  x=0.95cm, y=0.7cm,
  font=\small,
  lifeline/.style={line width=2.2pt},
  tick/.style={line width=0.6pt},
  label/.style={anchor=east},
]
  \def\xmin{0}
  \def\xmax{9}

  \def\ytone{0}
  \def\yttwo{-1}
  \def\ytthree{-2}
  \def\ytfour{-3}

  \draw[tick] (\xmin,0.6) -- (\xmax,0.6);
  \foreach \i in {0,...,9} {
    \draw[tick] (\i,0.55) -- (\i,0.65);
    \node[anchor=south] at (\i,0.65) {$p_{\i}$};
  }

  \node[label] at (-0.6,\ytone) {$t_1$};
  \node[label] at (-0.6,\yttwo) {$t_2$};
  \node[label] at (-0.6,\ytthree) {$t_3$};
  \node[label] at (-0.6,\ytfour) {$t_4$};

  \draw[lifeline] (1,\ytone) -- (4,\ytone);
  \draw[tick] (1,\ytone-0.12) -- (1,\ytone+0.12);
  \draw[tick] (4,\ytone-0.12) -- (4,\ytone+0.12);

  \draw[lifeline] (2,\yttwo) -- (4,\yttwo);
  \draw[tick] (2,\yttwo-0.12) -- (2,\yttwo+0.12);
  \draw[tick] (4,\yttwo-0.12) -- (4,\yttwo+0.12);

  \draw[lifeline] (5,\ytthree) -- (9,\ytthree);
  \draw[tick] (5,\ytthree-0.12) -- (5,\ytthree+0.12);
  \draw[tick] (9,\ytthree-0.12) -- (9,\ytthree+0.12);

  \draw[lifeline] (6,\ytfour) -- (9,\ytfour);
  \draw[tick] (6,\ytfour-0.12) -- (6,\ytfour+0.12);
  \draw[tick] (9,\ytfour-0.12) -- (9,\ytfour+0.12);

  \draw[dashed] (4,0.45) -- (4,-3.45);
  \node[anchor=west] at (4.05,0.25) {\footnotesize early uncompute};
\end{tikzpicture}
\caption{Semantic lifetimes under early uncomputation of the example in Fig. \ref{fig:running-example}. While entanglement-based liveness conservatively extends $t_1$ and $t_2$ to the end of the program, the subcircuit preceding $p_4$ is output-isolable; hence $t_1$ and $t_2$ can be safely reclaimed after $p_4$, whereas $t_3$ and $t_4$ remain live until $p_9$ due to their interaction with $y_2$.}\label{fig:semantic-lifetime-early-uncompute}
\end{figure}

\begin{definition}[Safe Reclamation Point]
A program point $p$ is a \emph{safe reclamation point} for a variable $v$ if $v$ is not live at $p$ and no irreversible operation applied to $v$ has effects that persist beyond $p$. At such a point, the compiler may insert the adjoint of the unitary history applied to $v$, restoring it to its initial state.
\end{definition}

In the running example (Fig.~\ref{fig:running-example}), entanglement-based liveness conservatively extends the lifetime of $t_1$ and $t_2$ beyond their last interaction with $y_1$. However, the segment preparing $(t_1,t_2)$ is output-isolable at that boundary, so the adjoint sequence may be inserted immediately after their final semantic influence on $y_1$. By contrast, the temporaries $(t_3,t_4)$ cannot be reclaimed before their contribution to $y_2$ is fully resolved, since restoring them earlier would alter the reduced state of a live output.

\begin{definition}[Critical Path]
Given the data dependence graph $G_D$, the \emph{critical path length} is the length of the longest directed path in $G_D$. Under optimal scheduling, this equals the circuit depth.
\end{definition}

In Fig.~\ref{fig:entanglement-graph-example}, the critical path corresponds to the longest dependence chain leading to the outputs and reflects the maximal propagation of data and entanglement-induced dependencies, independently of whether temporary variables may later admit early reclamation.

\subsection{Depth Preservation and Reduction}

A central consequence of Scope-Bounded Liveness-Guided Uncomputation concerns circuit depth. In a classical compute--then--uncompute strategy, a temporary unitary segment $U_f$ is executed forward and its adjoint $U_f^\dagger$ is appended only after the entire forward computation. Since both segments lie on the critical path, the depth contribution of $U_f$ is effectively doubled.

Under lifetime-based reclamation, $U_f^\dagger$ is inserted at the earliest safe reclamation point, once the associated variable becomes non-live. This permits local scheduling of $U_f^\dagger$ and allows overlap with subsequent forward computation whenever permitted by the data dependence graph $G_D$. Depth is therefore determined by interaction with the critical path rather than by global syntactic ordering of adjoint blocks.

In the running example (Fig.~\ref{fig:running-example}), the preparation of $(t_1,t_2)$ would, under global cleanup, be inverted only after the full computation completes. Under early reclamation, the adjoint segment is inserted immediately after its last semantic influence on $y_1$, as shown in Fig.~\ref{fig:depth-reduce}. In the modified circuit, the inverse blocks are no longer appended as a trailing suffix but are placed locally at the lifetime boundary, where they may overlap with subsequent computation on disjoint qubits. The resulting schedule eliminates the artificial duplication of the temporary segment along the critical path.

\begin{figure}[!t]
    \centering
    \includegraphics[width=0.85\linewidth]{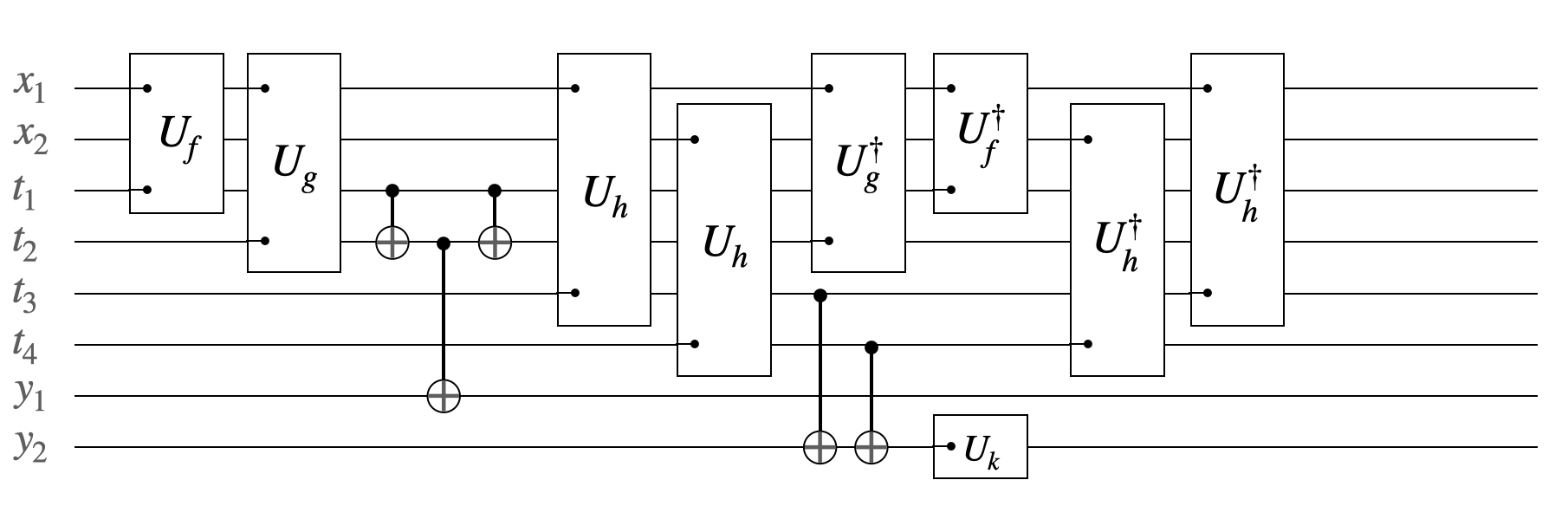}
    \caption{Circuit with early uncomputation: the adjoint blocks associated with $(t_1,t_2)$ are inserted immediately after their last semantic use, allowing overlap with the subsequent computation and reducing the overall circuit depth compared to global compute--then--uncompute.}
    \label{fig:depth-reduce}
\end{figure}

\begin{lemma}[Depth Bound under Early Reclamation]
Let $v$ be a temporary variable associated with a unitary segment $U_f$, and let $U_f^\dagger$ be inserted at its earliest safe reclamation point. Let $U_{\mathrm{rest}}$ denote the remaining forward computation. Under optimal scheduling consistent with $G_D$,
\[
\mathrm{depth}(C)
\;\le\;
\mathrm{depth}(U_f)
+
\max\!\big(
\mathrm{depth}(U_f^\dagger),
\mathrm{depth}(U_{\mathrm{rest}})
\big).
\]
\end{lemma}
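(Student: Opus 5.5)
The plan is to reason directly on the data dependence graph $G_D$, using the fact stated after its definition that under optimal scheduling the circuit depth equals the longest directed path (the critical path). The circuit $C$ splits into three parts: the forward segment $U_f$ preparing $v$, the adjoint $U_f^\dagger$ placed at the earliest safe reclamation point, and the remaining forward computation $U_{\mathrm{rest}}$. I will bound the length of every directed path in $G_D$ by the right-hand side. I will not compute the optimal schedule explicitly. Instead I will construct one concrete schedule of the required depth, which is enough because optimal scheduling can only do better.

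The schedule has two layers. First, all gates of $U_f$ are placed in the initial $\mathrm{depth}(U_f)$ time steps; any gate of $U_{\mathrm{rest}}$ that does not depend on $U_f$ may also be placed there, which only helps. Second, $U_f^\dagger$ and $U_{\mathrm{rest}}$ both start after this layer and run in parallel, so the second layer lasts $\max(\mathrm{depth}(U_f^\dagger),\mathrm{depth}(U_{\mathrm{rest}}))$ steps. The sum of the two layers gives the bound.

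The key step is to justify the parallelism: there must be no edge of $G_D$ between a gate of $U_f^\dagger$ and a gate of $U_{\mathrm{rest}}$ that forces them to be sequential. This is exactly what the safe reclamation point provides.

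\begin{itemize}
\item By the definition of a safe reclamation point, $v$ is not live there, so no later operation of $U_{\mathrm{rest}}$ acts on $v$ or on anything entangled with a live variable through $v$.
\item Hence $U_f^\dagger$ acts only on qubits that $U_{\mathrm{rest}}$ does not touch afterwards, apart from possibly shared read-only inputs.
\item Any gate of $U_{\mathrm{rest}}$ that must precede the reclamation point, such as the last use of $v$ (for example $\mathrm{CX}(t_2,y_1)$ in the running example), is absorbed into the first layer, or regarded as part of the segment associated with $v$.
\end{itemize}

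In path terms, every directed path in $G_D$ is a prefix inside $U_f$ followed by a suffix lying entirely in $U_f^\dagger$ or entirely in $U_{\mathrm{rest}}$, never one that alternates between them. Its length is therefore at most $\mathrm{depth}(U_f)+\max(\cdot,\cdot)$.

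The main obstacle is the treatment of shared control inputs and of the boundary gate. For shared inputs, in the running example $x_1$ is used both by $g^\dagger(x_1,t_2)$ and by $h(x_1,t_3)$, so under the paper's dependence definition these would be ordered. The first thing I will try is to take $G_D$ to include only write/modify conflicts. Reads of a control qubit commute, since both are controlled on $x_1$ in the computational basis and so have no true dependence. I expect that this, together with absorbing the last-use gate into the $U_f$ layer, is what the intended proof assumes. If that commutation argument is not accepted, I will state the result under the assumption that $U_f^\dagger$ and $U_{\mathrm{rest}}$ have no conflicting dependencies in $G_D$, which is the reading the lemma needs.
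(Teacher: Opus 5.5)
The step that fails is your main fix for shared inputs. Commuting is not the same as running in parallel. In the unit-cost circuit model a qubit takes part in at most one gate per layer, so $g^\dagger(x_1,t_2)$, $f^\dagger(x_1,t_1)$ and $h(x_1,t_3)$ occupy pairwise distinct time steps on $x_1$, whatever order you choose. Once read/read edges are removed from $G_D$, the identity between depth and longest path is lost, and your whole argument rests on that identity. The route then certifies a false bound. Suppose $U_f$, $U_f^\dagger$ and $U_{\mathrm{rest}}$ each consist of $d$ gates controlled by one common qubit $x$, with targets $t,t,y$. Your modified graph has longest path $d+\max(d,d)=2d$, yet $x$ alone lies in $3d$ gates, so the depth is at least $3d$.

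Your fallback, assuming no conflicts between $U_f^\dagger$ and $U_{\mathrm{rest}}$, is exactly the disjoint-support case on which the paper's proof actually rests; its ``otherwise'' clause does not yield the $\max$ either. In your write-up, however, it stays an extra hypothesis. The boundary gate also violates it, since the last use of $v$ shares $v$ with $U_f^\dagger$. Absorbing that gate into the first layer does not help: it depends on the output of $U_f$, so the first layer then exceeds $\mathrm{depth}(U_f)$.

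What you are missing is condition (2) of the liveness definition; your bullet only uses condition (1). The conservative $E_E(p)$ has an edge between $v$ and every qubit that has shared a gate with it, and liveness propagates along these edges. So if $v$ is not live at $p$, no qubit in the connected component $K$ of $v$ in $E_E(p)$ is touched by any operation after $p$.

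Hence the situation you worried about is not an instance of the lemma at all. In the running example, $x_1$ (used later by $h$) and $y_1$ (a live output) keep $t_1,t_2$ live. That is why the paper justifies that particular reclamation by output-isolability rather than as a safe reclamation point.

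Now let $U_f$ consist of all gates before $p$ acting on $K$, including the uses of $v$. Undoing those uses is harmless because every qubit of $K$ is non-live. Any gate touching $K$ before $p$ has all its qubits in $K$, so $U_f$ and $U_f^\dagger$ act inside $K$ and $U_{\mathrm{rest}}$ acts outside it. Then $G_D$ has no edges between them, and your two-layer schedule gives $\mathrm{depth}(C)\le\max(\mathrm{depth}(U_f)+\mathrm{depth}(U_f^\dagger),\mathrm{depth}(U_{\mathrm{rest}}))$, which is at most the stated bound. With disjointness derived rather than assumed, your argument becomes a more explicit version of the paper's disjoint-support case.
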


\begin{proof}
Once $v$ is non-live, $U_f^\dagger$ depends only on the qubits involved in $U_f$. Its contribution to depth is governed by its interaction with the critical path of $U_{\mathrm{rest}}$. If supports are disjoint, the segments execute in parallel; otherwise, accumulation follows the longest dependency chain in $G_D$. \qed
\end{proof}

Thus early reclamation never increases asymptotic depth relative to global compute--then--uncompute and may strictly reduce it when $U_f^\dagger$ does not lie on the critical path of $U_{\mathrm{rest}}$.

\begin{lemma}[Zero-Overhead Condition]
If (i) $v$ is not live beyond point $p$, (ii) $U_f^\dagger$ acts on qubits disjoint from the critical path of $U_{\mathrm{rest}}$, and (iii) $\mathrm{depth}(U_{\mathrm{rest}}) \ge \mathrm{depth}(U_f^\dagger)$, then inserting $U_f^\dagger$ at $p$ does not increase circuit depth.
\end{lemma}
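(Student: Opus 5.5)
The plan is to compare depths through longest paths in the data dependence graph $G_D$. Under optimal scheduling the circuit depth equals the critical path length, as in the Critical Path definition. So it suffices to show that inserting $U_f^\dagger$ at $p$ creates no directed path in $G_D$ longer than the critical path of the circuit without it. Write $C_0$ for the circuit consisting of $U_f$ followed by $U_{\mathrm{rest}}$, and $C$ for $C_0$ with $U_f^\dagger$ inserted at $p$. Every path of $C_0$ is still a path of $C$, so $\mathrm{depth}(C)\ge\mathrm{depth}(C_0)$ trivially. The work is to bound the new paths, namely those that pass through at least one gate of $U_f^\dagger$.

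\textbf{Step 1: predecessors of $U_f^\dagger$.} By hypothesis (i) and the Safe Reclamation Point definition, $U_f^\dagger$ acts only on the support of $U_f$, that is, on $v$ and its history. Every incoming edge to a gate of $U_f^\dagger$ therefore comes from $U_f$ or from a gate of $U_{\mathrm{rest}}$ on that support. I would first argue, in the setting of the Depth Bound lemma, that these $U_{\mathrm{rest}}$ gates lie before $p$ and off the critical path, which is what hypothesis (ii) licenses. Then the longest path ending in $U_f^\dagger$ has length at most $\mathrm{depth}(U_f)+\mathrm{depth}(U_f^\dagger)$.

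\textbf{Step 2: successors of $U_f^\dagger$.} Because $v$ is not live after $p$, no later operation touches $v$. By (ii), no critical-path gate of $U_{\mathrm{rest}}$ shares a qubit with $U_f^\dagger$. Hence a path through $U_f^\dagger$ cannot continue into the critical path of $U_{\mathrm{rest}}$. Any continuation into non-critical gates of $U_{\mathrm{rest}}$ has to be absorbed into the slack of those gates; this is where I would use that they do not lie on the critical path.

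\textbf{Step 3: combine.} Every path of $C$ either avoids $U_f^\dagger$, and so has length at most $\mathrm{depth}(C_0)$, or has length at most $\mathrm{depth}(U_f)+\mathrm{depth}(U_f^\dagger)$. By (iii) the latter is at most $\mathrm{depth}(U_f)+\mathrm{depth}(U_{\mathrm{rest}})$. This agrees with the Depth Bound lemma, whose $\max$ collapses to $\mathrm{depth}(U_{\mathrm{rest}})$. Finally, I would argue that $\mathrm{depth}(C_0)=\mathrm{depth}(U_f)+\mathrm{depth}(U_{\mathrm{rest}})$ whenever the critical path runs through $U_f$ into $U_{\mathrm{rest}}$. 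In the other case the bound $\mathrm{depth}(U_f)+\mathrm{depth}(U_f^\dagger)$ still needs comparing against $\mathrm{depth}(C_0)$.

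\textbf{Main obstacle.} The hardest part is Step 2 together with this last comparison. The hypotheses as stated only exclude interaction with the critical path and only compare against $\mathrm{depth}(U_{\mathrm{rest}})$, not against the depth of $C_0$. My first approach would be to read the phrase ``does not increase circuit depth'' relative to the bound of the Depth Bound lemma, i.e.\ $\mathrm{depth}(U_f)+\mathrm{depth}(U_{\mathrm{rest}})$, under the scheduling model in which $U_{\mathrm{rest}}$ starts after $U_f$. Then $U_f^\dagger$ runs in parallel on disjoint wires within the time window of $U_{\mathrm{rest}}$ and finishes no later than it, by (iii). If a strict reading is needed, I would strengthen (ii) to say that $U_f^\dagger$ shares no qubit with any gate of $U_{\mathrm{rest}}$ scheduled after $p$. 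That removes all outgoing edges from $U_f^\dagger$ and makes Step 2 immediate.
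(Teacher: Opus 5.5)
The step that fails is Step~1. Your overall plan matches the paper's two-line proof: bound longest paths in $G_D$, let $U_f^\dagger$ run inside the time window of $U_{\mathrm{rest}}$, and close with (iii). You are also right that (ii) is too weak as literally stated. The paper's proof simply assumes ``disjoint support'' between $U_f^\dagger$ and $U_{\mathrm{rest}}$, and implicitly measures depth against $\mathrm{depth}(U_f)+\mathrm{depth}(U_{\mathrm{rest}})$, as in the Depth Bound lemma.

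Step~1 is not the step you flagged. You concede that $U_f^\dagger$ may have predecessors in $U_{\mathrm{rest}}$ on its own support before $p$. You then bound the longest path ending in $U_f^\dagger$ by $\mathrm{depth}(U_f)+\mathrm{depth}(U_f^\dagger)$, as if those gates had length zero. Lying off the critical path of $U_{\mathrm{rest}}$ gives no such bound. A chain $U_f\to g_1\to\dots\to g_k\to U_f^\dagger$ has length $\mathrm{depth}(U_f)+k+\mathrm{depth}(U_f^\dagger)$. The $g_i$ are precisely the uses of the temporary (like $p_4$ in the running example), so this is the generic case, not an edge case.

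Here is a concrete counterexample. Let $U_f$ be two layers on a qubit $a$. Let $U_{\mathrm{rest}}$ be one CNOT from $a$ into an output, together with two gates on a qubit $b$. Insert a two-layer $U_f^\dagger$ on $a$ right after that CNOT. Then (i), (ii), (iii) and your strengthened (ii) all hold. Yet $\mathrm{depth}(C)=5$, which exceeds both $\mathrm{depth}(C_0)=3$ and $\mathrm{depth}(U_f)+\mathrm{depth}(U_{\mathrm{rest}})=4$. Removing only the outgoing edges of $U_f^\dagger$ therefore does not rescue the proof.

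What the argument needs is that $U_f^\dagger$ have no dependence edges to or from $U_{\mathrm{rest}}$ at all. This is the paper's ``disjoint support'', and it amounts to counting every use of $v$ in the segment before $p$ rather than in $U_{\mathrm{rest}}$. Then $U_f^\dagger$ can start as soon as that segment ends, your parallel-window argument goes through verbatim, and (iii) finishes it.

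If uses of $v$ inside $U_{\mathrm{rest}}$ must be allowed, keep your strengthened (ii) and replace (iii) by $\ell+\mathrm{depth}(U_f^\dagger)\le\mathrm{depth}(U_{\mathrm{rest}})$. Here $\ell$ is the length of the longest path in $U_{\mathrm{rest}}$ ending at a predecessor of $U_f^\dagger$.

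Finally, your switch to the serial baseline is forced rather than optional. Even with full disjointness, suppose $U_f$ and $U_{\mathrm{rest}}$ act on disjoint wires with $d=\mathrm{depth}(U_f)\le\mathrm{depth}(U_{\mathrm{rest}})=D<2d$. Inserting $U_f^\dagger$ then raises the optimally scheduled depth from $D$ to $2d$.
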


\begin{proof}
Under disjoint support, $U_f^\dagger$ and $U_{\mathrm{rest}}$ can be scheduled concurrently. Since $U_{\mathrm{rest}}$ determines the critical path, total depth remains unchanged. \qed
\end{proof}

In the ideal case, lifetime-guided uncomputation incurs zero additional depth, unlike global cleanup strategies where adjoint segments necessarily extend the critical path.

There exist structured programs in which a temporary segment of depth $d$
lies on the critical path under a global compute--then--uncompute strategy,
yielding total depth $2d + O(1)$, whereas lifetime-guided insertion of the
adjoint segment allows overlap with independent forward computation, yielding
depth $d + O(1)$. Thus lifetime-based reclamation may achieve strict asymptotic improvement over global cleanup in the presence of independent subcomputations.


\subsection{Ancilla Reuse and Width Reduction}

Beyond circuit depth, lifetime-guided uncomputation directly impacts circuit width, i.e., the number of simultaneously live quantum variables. In global compute--then--uncompute strategies, temporary ancillas often remain allocated until the end of the computation, increasing peak resource usage. Under early reclamation, temporaries are restored and released as soon as their semantic lifetime terminates, allowing physical qubits to be reused.

\begin{definition}[Live Set]
At program point $p$, the \emph{live set} is
\[
L(p) = \{\, v \mid \mathrm{live}(v,p) \,\},
\qquad
W_{\max} = \max_p |L(p)|.
\]
\end{definition}

The quantity $W_{\max}$ bounds the number of qubits required under optimal reclamation.

In the running example (Fig.~\ref{fig:running-example}), global cleanup would keep $t_1$ and $t_2$ allocated while $(t_3,t_4)$ are active, increasing peak width. Under early reclamation (Fig.~\ref{fig:width-reduced}), the adjoint blocks associated with $(t_1,t_2)$ are inserted at their lifetime boundary, so these temporaries are restored before the subsequent subcomputation reaches peak activity. In the modified circuit, the allocation intervals of $(t_1,t_2)$ and $(t_3,t_4)$ no longer overlap, directly reducing $W_{\max}$.

\begin{lemma}[Width Bound under Early Reclamation]
Let $v$ be a temporary variable whose semantic lifetime ends at program point $p$. Then for all $q>p$, $v\notin L(q)$. Consequently, inserting $U_f^\dagger$ at $p$ reduces $|L(q)|$ for all subsequent $q$ in which $v$ would otherwise remain live.
\end{lemma}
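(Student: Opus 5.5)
The plan is to unfold the Definition of Semantic Lifetime and the Definition of Live Set, and then compare the circuit with and without the adjoint $U_f^\dagger$. First I fix what "lifetime ends at $p$" means: $p = p_{\max}$ for $v$, so $p=\max\{p' \mid \mathrm{live}(v,p')\}$. For every $q>p$ this maximality gives $\neg\mathrm{live}(v,q)$ directly, and by the definition of $L(q)$ this is $v\notin L(q)$. This is the first claim, and it uses only the definitions.

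Some care is needed because the claim concerns the circuit \emph{after} inserting $U_f^\dagger$ at $p$. I would argue that $p$ is a safe reclamation point, since $v$ is not live there. The inserted adjoint acts only on the qubits in the support of $U_f$, as in the proof of the Depth Bound lemma, and it restores $v$ to its initial state. After $U_f^\dagger$, no later block of $U_{\mathrm{rest}}$ touches $v$, because otherwise clause (1) of Liveness would have made $v$ live at $p$. It remains to check clause (2), propagation through $E_E$. In the modified program $v$ is returned to a product state $\ket{0}$, so its entanglement edges created by $U_f$ are semantically cancelled. If the conservative graph keeps an edge to some $w$, then $w$ being live after $p$ would already have forced $\mathrm{live}(v,p)$ in the original analysis, contradicting the choice of $p_{\max}$. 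So $v$ stays non-live at every $q>p$ in the modified circuit as well.

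For the second claim I compare two scenarios: the global-cleanup circuit, where $v$ remains allocated until the trailing adjoint, and the early-reclamation circuit. Let $L^{\mathrm{glob}}(q)$ and $L^{\mathrm{early}}(q)$ be the corresponding live sets. For the other variables, liveness depends only on future operations and their own entanglement edges. These are unchanged, since $U_f^\dagger$ acts on the support of $U_f$ and $v$ is non-live. This gives $L^{\mathrm{early}}(q) = L^{\mathrm{glob}}(q)\setminus\{v\}$. Hence $|L^{\mathrm{early}}(q)| = |L^{\mathrm{glob}}(q)|-1$ whenever $v\in L^{\mathrm{glob}}(q)$, which is precisely when $v$ "would otherwise remain live". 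This strictly reduces $|L(q)|$ at each such $q$, and therefore gives $W_{\max}^{\mathrm{early}}\le W_{\max}^{\mathrm{glob}}$.

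The main obstacle is the step asserting that the other live sets are unchanged. $U_f^\dagger$ may also act on the other qubits in $\mathrm{supp}(U_f)$, for instance $x_1$ in the running example. I would handle this by treating those qubits as already live owing to later uses, so counting them does not change. I would also state explicitly that the comparison is made with respect to the allocation-based notion of width that global cleanup induces. In that notion $v$ counts as occupied until its trailing adjoint, and the lemma's "would otherwise remain live" is read in exactly this sense.
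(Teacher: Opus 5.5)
Your first claim is argued exactly as in the paper: maximality of $p_{\max}$ gives $\neg\mathrm{live}(v,q)$, i.e.\ $v\notin L(q)$, for all $q>p$.

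The trouble starts in your second paragraph. You fixed $p=p_{\max}$, so the very definition you used gives $\mathrm{live}(v,p)$. Hence ``$p$ is a safe reclamation point, since $v$ is not live there'' is false, because the Safe Reclamation definition demands non-liveness at $p$. Your two contradictions (``otherwise clause (1) would have made $v$ live at $p$'' and ``would already have forced $\mathrm{live}(v,p)$'') therefore contradict nothing.

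This is not cosmetic. Suppose $\mathrm{live}(v,p_{\max})$ is witnessed by clause (1). The witnessing block can only be $U_{p+1}$, since any later one would make $v$ live at $p+1$. So inserting $U_f^\dagger$ literally at $p_{\max}$ uncomputes $v$ before its last use, and $v$ is touched again after the adjoint.

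The fix is to reclaim at the first non-live point $p_{\max}+1$. Equivalently, read ``lifetime ends at $p$'' as $\neg\mathrm{live}(v,p)$, which is how the running example places the adjoint right after the last use $p_4$; the paper's ``at or before $p$'' glosses over the same indexing. Then derive each contradiction against $\mathrm{live}(v,q)$ for some $q>p$, using that gate-tracked $E_E$ edges only accumulate.

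For the second claim you replace the paper's one-line appeal to the Restoration Invariant (restored, hence need not stay allocated) with an explicit comparison against a global-cleanup baseline. That is a good way to make ``would otherwise remain live'' precise. However, your equality $L^{\mathrm{early}}(q)=L^{\mathrm{glob}}(q)\setminus\{v\}$ is unjustified and false in general. In the global circuit the trailing $U_f^\dagger$ is a future use of every qubit in $\mathrm{supp}(U_f)$. So an input with no later use in $U_{\mathrm{rest}}$ and no live neighbour is live in the global circuit but not in the early one.

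Your patch, assuming such qubits are live anyway, adds a hypothesis the lemma lacks, and it is unnecessary. Compare the two circuits at corresponding points after the insertion, i.e.\ after the same forward block $U_q$.
\begin{itemize}
\item The early circuit's future operations form a subset of the global circuit's.
\item Both circuits have the same gate-tracked edges, because each gate of $U_f^\dagger$ has the support of a gate of $U_f$, which was already applied.
\item Liveness is monotone in the set of future uses and in the edge set, since clause (2) is a fixed point of an operator monotone in both.
\end{itemize}
Together with $v\notin L^{\mathrm{early}}(q)$ from the repaired first step, this gives $L^{\mathrm{early}}(q)\subseteq L^{\mathrm{glob}}(q)\setminus\{v\}$. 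Since $v\in L^{\mathrm{glob}}(q)$ (its trailing adjoint is still ahead), you get the strict decrease $|L^{\mathrm{early}}(q)|<|L^{\mathrm{glob}}(q)|$ you want.
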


\begin{proof}
By definition, $\mathrm{live}(v,p)$ does not hold for $q>p$. The Restoration Invariant ensures that $v$ is returned to its initial state at or before $p$, so it need not remain allocated thereafter. \qed
\end{proof}

\begin{lemma}[Peak Width Reduction]
Let $v_1,\dots,v_k$ be temporary variables with pairwise disjoint semantic lifetimes. Then there exists a schedule under which all $v_i$ share the same physical qubit, and $W_{\max}$ does not increase with $k$.
\end{lemma}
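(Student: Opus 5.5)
The approach is an interval-scheduling (register-allocation) argument built on the Width Bound lemma and the Restoration Invariant. Write the semantic lifetimes as intervals $I_i=[a_i,b_i]$ for $v_1,\dots,v_k$. Since the intervals are pairwise disjoint and lie on the linear order of program points, we may relabel so that $b_1 < a_2 \le b_2 < a_3 \le \dots < a_k \le b_k$. The schedule we construct inserts, for each $i$, the adjoint $U_{f_i}^\dagger$ of the unitary history on $v_i$ at its earliest safe reclamation point. Under the running output-isolability assumption, this point coincides with the end $b_i$ of its semantic lifetime. All $v_i$ are then mapped to a single physical qubit $r$.

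The correctness of the sharing is proved by induction on $i$. The invariant is that at every point in $(b_{i-1}, a_i)$ the qubit $r$ is in its initial state $\ket{0}$ and is unentangled with every other register. The base case holds because $r$ is freshly allocated. For the step, the Width Bound lemma gives $v_i\notin L(q)$ for all $q>b_i$. The Restoration Invariant gives that after $U_{f_i}^\dagger$ the qubit $r$ has returned to $\ket{0}$. Output-isolability gives that this restoration leaves the reduced state of the live variables unchanged; this is the third clause of that definition. Consequently $r$ is a product factor $\ket{0}\otimes\rho_{\mathrm{rest}}$. When $v_{i+1}$ is allocated at $a_{i+1}>b_i$, binding it to $r$ is therefore indistinguishable from allocating a fresh $\ket{0}$ qubit. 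Every operation $U_j$ with $j\in I_{i+1}$ thus acts exactly as in the original program, and the observable semantics is preserved.

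For the width claim, at any point $p$ at most one of the $v_i$ lies in $L(p)$, because the lifetimes are disjoint. The contribution of $\{v_1,\dots,v_k\}$ to $|L(p)|$ is therefore at most $1$, independently of $k$. Since physical allocation follows $L(p)$ under the constructed schedule, we get $W_{\max}\le |L'|+1$, where $L'$ collects the non-temporary live variables. This bound does not depend on $k$.

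The main obstacle is the gap between the conservative entanglement-based liveness and actual restorability. Disjointness of semantic lifetimes does not by itself guarantee that $v_i$ has been disentangled at $b_i$. For example, $v_i$ may be entangled with a live output whose liveness is later dropped only because it stops being used. I would handle this by making explicit that the lemma is applied under the hypothesis that each $b_i$ is a safe reclamation point, meaning there is no persistent irreversible effect and the segment is output-isolable. If the entanglement-based lifetime is used instead, Condition (2) of the liveness definition forces any variable entangled with live data to remain live. Hence non-liveness at $b_i$ already precludes residual entanglement with live data. The remaining risk is entanglement with dead non-temporary data, which I would argue is semantically harmless: such data is never observed again, so the reduced state on live variables is unaffected.
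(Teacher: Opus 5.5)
Your proposal is correct and follows the paper's own argument. Disjointness puts at most one $v_i$ in any $L(p)$, and restoring each $v_i$ at the end of its lifetime lets one physical qubit be reused in sequence; your induction (shared qubit in $\ket{0}$ between lifetimes) and explicit safe-reclamation hypothesis spell out what the paper's two-line proof assumes through the Restoration Invariant. Drop the fallback claim in your last paragraph that leftover entanglement with dead data is harmless: a qubit still correlated with dead data has a mixed reduced state, which the adjoint of $v_i$'s own history need not repair if the dead partner was acted on later, so it cannot be handed to $v_{i+1}$ as a fresh $\ket{0}$ ancilla — rely on the explicit hypothesis instead.
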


\begin{proof}
At any program point, at most one $v_i$ belongs to $L(p)$. Since each $v_i$ is restored at the end of its lifetime, the same physical qubit may be reused sequentially. \qed
\end{proof}

Thus circuit width is governed by semantic lifetime rather than lexical allocation. Lifetime-guided uncomputation induces reversible resource pooling: temporary qubits are reused once they cease to be live. In the worst case, width matches that of global cleanup; in structured programs with short and non-overlapping lifetimes, peak width is strictly reduced.

\begin{figure}[!t]
    \centering
    \includegraphics[width=0.85\linewidth]{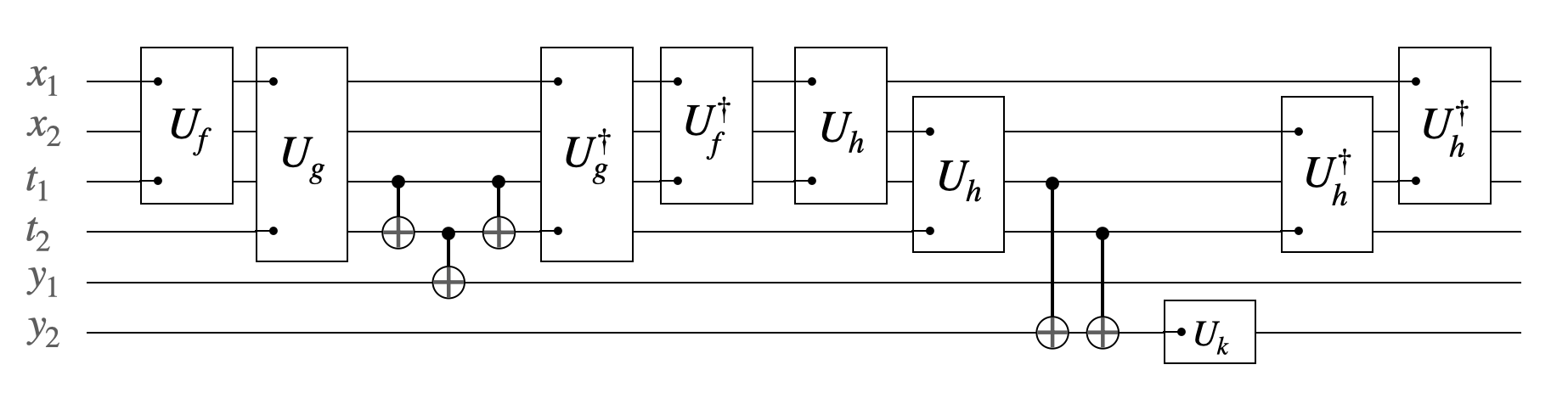}
    \caption{Width-reduced circuit obtained via early reclamation: the temporaries $t_1$ and $t_2$ are restored immediately after their last semantic use, preventing overlap with $t_3$ and $t_4$ and reducing the peak number of simultaneously live qubits.}
    \label{fig:width-reduced}
\end{figure}


\subsection{Compositionality of Early Reclamation}

A central property of Scope-Bounded Liveness-Guided Uncomputation is its compositional behavior under nested lexical scopes. Since reclamation is triggered by semantic lifetime rather than global structure, early reclamation within an inner scope must not interfere with correctness or lifetime analysis of enclosing scopes.
Consider nested scopes of the form
\[
S_{\mathrm{outer}} \;\{\, \dots \; S_{\mathrm{inner}} \;\{ \dots \}\; \dots \}.
\]
Let $v$ be introduced in $S_{\mathrm{inner}}$ and $w$ in $S_{\mathrm{outer}}$.

\begin{lemma}[Nested Scope Independence]
If $p$ is a safe reclamation point for $v$ inside $S_{\mathrm{inner}}$, inserting the adjoint segment associated with $v$ at $p$ does not alter the semantic lifetime of any variable $w$ defined outside $S_{\mathrm{inner}}$.
\end{lemma}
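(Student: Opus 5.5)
The plan is to argue directly from the definitions of liveness and semantic lifetime, showing that the predicate $\mathrm{live}(w,q)$ has the same truth value at every program point $q$ before and after the adjoint segment $U_v^\dagger$ associated with $v$ is inserted at $p$. Program points must be reindexed after insertion. I will therefore identify each point of the original circuit $C$ with its image in the modified circuit $C'$, and treat the inserted points separately. Since the semantic lifetime of $w$ is determined by the set $\{q \mid \mathrm{live}(w,q)\}$, equality of this set under the correspondence gives the claim.

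\textbf{Step 1: support of the adjoint.} By the definition of safe reclamation point, $U_v^\dagger$ is the adjoint of the unitary history applied to $v$. I will argue that its support is contained in $\{v\}$ together with variables that are themselves local to $S_{\mathrm{inner}}$ or read-only controls. The read-only controls correspond to inputs such as $x_1$ in the running example.

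\textbf{Step 2: condition (1) of liveness.} Every future operation of $C$ that directly involves $w$ still occurs in $C'$, in the same relative order. Hence condition (1) is preserved at original points. At the inserted points, $w$ is live iff it was live at $p$, because the next operation on $w$ is unchanged.

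\textbf{Step 3: condition (2), entanglement propagation.} This is the main obstacle. Inserting gates touching $w$, or entangled partners of $w$, could in principle add edges to $E_E$. I will argue that the inserted segment removes rather than adds correlations. By the Restoration Invariant, after $U_v^\dagger$ the variable $v$ is back in its initial state, so all edges incident to $v$ may be dropped. Since $v$ is not live at $p$, no liveness of $w$ was ever justified through $v$ after $p$: by the definition of liveness, $v$ non-live means no live $w'$ is adjacent to $v$. So removing these edges cannot change $\mathrm{live}(w,\cdot)$.

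For edges between $w$ and a control $x$ of $U_v^\dagger$, the conservative graph would add $(x,v)$ edges, not $(x,w)$ edges, because $w$ lies outside the support. I will verify this by checking that any edge touching $w$ requires a multi-qubit gate jointly acting on $w$, which Step 1 excludes. If $w$ itself is a control of the history (as $x_1$ is), its liveness is decided by its own future uses, which Step 2 preserves. For this case I will invoke output-isolability, $U = U_L \circ V_T$, which states that the inverse does not alter the reduced state of the live set. I expect this is the point where one must lean on the semantic isolability assumption rather than the syntactic graph.

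\textbf{Step 4: conclude.} The sets of live points agree, so $p_{\min}$ and $p_{\max}$ for $w$ coincide, and the semantic lifetime of $w$ is unchanged. Aliases of $w$ are handled by the resource-level liveness remark, which applies the same argument to the underlying register.
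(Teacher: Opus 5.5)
Your Steps 2 and 3 rule out $w$ \emph{losing} liveness and rule out new edges incident to $w$. They never rule out $w$ \emph{gaining} liveness through new future uses, and in general it does. The segment $U_v^\dagger$ consists of new future operations on every qubit in its support. By condition (1) each such qubit becomes live at $p$ and at earlier points, and by condition (2) that liveness spreads along edges already in $E_E$.

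The case you flag, $w$ a control of $v$'s history, is therefore not benign. Under the paper's definitions it is a counterexample:
\begin{itemize}
\item the history contains a gate acting jointly on $w$ and $v$, so $(v,w)\in E_E(p)$;
\item $v$ is non-live at $p$, so condition (2) forces $\neg\mathrm{live}(w,p)$;
\item after insertion, $w$ participates directly in an inserted gate, so $\mathrm{live}(w,p)$ becomes true and $w$'s lifetime is extended.
\end{itemize}
A minimal instance is $C=(\mathrm{CNOT}(w,v),U)$ with $U$ acting on neither qubit, and $p=1$.

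Output-isolability cannot repair this. It is not a hypothesis of the lemma, and it constrains reduced states, not the liveness predicate from which lifetimes are defined. Your model case $x_1$ is misleading: $x_1$ is live at $p_4$, hence so is $t_1$, so $p_4$ is not a safe reclamation point in the lemma's sense. The same failure occurs even when $w\notin\mathrm{supp}(U_v^\dagger)$. An inner $u$ in the support with $(u,w)\in E_E(p)$ acquires a future use and becomes live, which makes $w$ live via condition (2). Your Step 3 checks only for new edges at $w$, not for changed liveness of $w$'s existing neighbours.

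The paper's proof has the same outline as yours. It gets past this point only by asserting that the adjoint segment ``affects only qubits local to $v$'' and leaves relations among outer variables untouched. As shown above, that is an extra assumption, not a consequence of $p$ being a safe reclamation point.

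To make your argument sound, state the assumption explicitly and in a slightly stronger form: the connected component of $v$ in $G_E(p)$, which contains $\mathrm{supp}(U_v^\dagger)$, has no variable defined outside $S_{\mathrm{inner}}$. Then read the recursive liveness definition as a least fixed point: $\mathrm{live}(x,q)$ holds iff the component of $x$ in $G_E(q)$ contains a variable used directly after $q$. Non-liveness of $v$ at $p$ then says no variable in $v$'s component is used after $p$. Both inclusions follow:
\begin{itemize}
\item For original points $q\le p$, $E_E(q)\subseteq E_E(p)$, so $w$'s component never meets the support and the inserted gates add no future use to it.
\item At the inserted points and later, all new edges and uses lie inside $v$'s component, which no original gate after $p$ touches. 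So $w$'s component and its future uses coincide with the original circuit's: at $p$ for inserted points, at $q$ for later points.
\end{itemize}
This gives the equality of live sets your Step 4 needs.
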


\begin{proof}
At a safe reclamation point, $v$ is non-live and disentangled from variables that remain live. In particular, it cannot influence outer-scope variables beyond $p$. The adjoint segment therefore affects only qubits local to $v$ and does not modify the dependence or entanglement relations among outer variables. Hence the semantic lifetime of $w$ is unchanged. \qed
\end{proof}

Thus early reclamation is stable under scope nesting.

\begin{lemma}[Lifetime Monotonicity]
Let $P$ be a program and $B$ a sub-block of $P$. For any variable $v$ defined in $B$, the semantic lifetime computed within $B$ is contained in its semantic lifetime computed in $P$.
\end{lemma}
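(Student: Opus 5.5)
The plan is to compare the two liveness predicates pointwise and then pass to the interval endpoints. Let $C_P=(U_1,\dots,U_n)$ be the compiled sequence of $P$. Let $C_B=(U_a,\dots,U_b)$ be the contiguous subsequence realizing $B$, indexed by the same program points. This is possible because the semantic contract lets lexical scope delimit a contiguous segment of the unitary decomposition. We then write $\mathrm{live}_B(v,p)$ and $\mathrm{live}_P(v,p)$ for the liveness predicate of Definition (Liveness), evaluated relative to $C_B$ and $C_P$ respectively, for $p\in[a,b]$.

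\textbf{Step 1: monotonicity of the ingredients.} Two facts are needed.
\begin{itemize}
\item For $p\in[a,b]$, the set of future operations $\{U_j : p<j\le b\}$ seen in $B$ is contained in $\{U_j : p<j\le n\}$ seen in $P$.
\item The conservative entanglement graph satisfies $E_E^B(p)\subseteq E_E^P(p)$. This holds because edges are introduced by multi-qubit gates already applied. The history in $B$ is a suffix of the history in $P$ (gates $U_a,\dots,U_p$ versus $U_1,\dots,U_p$), and edges are never removed.
\end{itemize}
Aliasing is handled by working with resources $R$ throughout, as the paper stipulates. Since liveness of $R$ is the disjunction over its aliases, monotonicity carries over directly.

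\textbf{Step 2: pointwise implication.} Show $\mathrm{live}_B(v,p)\Rightarrow \mathrm{live}_P(v,p)$ for all $p\in[a,b]$. Liveness is the least predicate closed under clauses (1) and (2), i.e.\ a least fixed point. We argue by induction on the derivation of $\mathrm{live}_B(v,p)$.
\begin{itemize}
\item Clause (1): a witness $U_j$ with $j\le b$ is also a future operation in $P$.
\item Clause (2): an edge $(v,w)\in E_E^B(p)\subseteq E_E^P(p)$ together with $\mathrm{live}_B(w,p)$ gives $\mathrm{live}_P(w,p)$ by the induction hypothesis, hence $\mathrm{live}_P(v,p)$.
\end{itemize}
Equivalently, the operator defining $\mathrm{live}_P$ dominates that defining $\mathrm{live}_B$, so its least fixed point is larger.

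\textbf{Step 3: pass to intervals.} By Step 2, $\{p\mid \mathrm{live}_B(v,p)\}\subseteq\{p\mid \mathrm{live}_P(v,p)\}$. Hence $p^B_{\min}\ge p^P_{\min}$ and $p^B_{\max}\le p^P_{\max}$, which gives $[p^B_{\min},p^B_{\max}]\subseteq[p^P_{\min},p^P_{\max}]$.

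The main obstacle is the definition's word ``maximal interval such that live holds at every point'', which conflicts with the $\min/\max$ formulas when the live set has gaps. I would adopt the $\min/\max$ (convex hull) reading, for which containment follows immediately as above. I would also remark that, under the ``maximal contiguous run'' reading, the $B$-run containing a given point lies inside the corresponding $P$-run by Step 2. Either way the inclusion holds.

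A secondary subtlety is that the entanglement edges seen in $B$ should not include correlations inherited from before $a$. This only shrinks $E_E^B$ and so helps the inclusion. I would flag it explicitly so the reader sees why the direction is $B\subseteq P$ and not equality.
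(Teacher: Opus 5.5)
Your proof is correct and follows the same idea as the paper's proof: restricting the analysis to $B$ can only remove reasons for liveness, so $\mathrm{live}_B(v,p)$ implies $\mathrm{live}_P(v,p)$ and the lifetime interval can only shrink. Your version is more careful than the paper's one-line sketch, which argues only about removed future uses and only about the upper endpoint $p_{\max}$, whereas your inclusion $E_E^B(p)\subseteq E_E^P(p)$ via the suffix-history observation and your treatment of $p_{\min}$ supply exactly what that sketch leaves implicit.
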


\begin{proof}
Liveness is defined in terms of future dependencies and entanglement propagation. Restricting analysis to $B$ can only remove potential future uses, not introduce new ones. Therefore the maximal live point in $B$ does not exceed that in $P$. \qed
\end{proof}

Lifetime monotonicity ensures that reclamation decisions taken locally remain valid when the block is embedded into a larger program.

\begin{lemma}[Compositional Restoration]
If, for every lexical scope $S$, all temporaries introduced in $S$ satisfy the Restoration Invariant at the end of their semantic lifetime, then the entire program satisfies the Restoration Invariant.
\end{lemma}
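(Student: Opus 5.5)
The plan is an induction on the nesting depth of lexical scopes, using Nested Scope Independence and Lifetime Monotonicity to move local restoration guarantees outward. Program-level Restoration Invariant will mean: at every point $p$, each temporary $v$ whose semantic lifetime ended at or before $p$ is in its initial state $\ket{0}$ and is not entangled with any live variable. For an innermost scope $S$ (no nested subscopes), the hypothesis gives this directly for every temporary of $S$. By Lifetime Monotonicity, the lifetime computed within $S$ is contained in the lifetime computed in the enclosing program. So the reclamation point chosen locally is still non-live globally, and remains a safe reclamation point once $S$ is embedded.

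For the inductive step, let $S$ contain subscopes $S_1,\dots,S_m$ that already satisfy the invariant by the induction hypothesis. I will argue in three steps, in this order.

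First, inserting the adjoint segments for temporaries of each $S_i$ at their safe reclamation points does not change the semantic lifetime of any variable declared in $S$ or further out. This is Nested Scope Independence, applied once per reclaimed temporary. Since each insertion leaves the dependence and entanglement relations among outer variables unchanged, the result iterates over a sequence of insertions.

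Second, the lifetimes of the temporaries of $S$ itself are therefore the same as in the program without inner reclamation. The hypothesis on $S$ then says they are restored at the end of those lifetimes.

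Third, restorations of distinct temporaries do not undo one another. Each adjoint acts only on the unitary history of its own temporary, and that temporary is disentangled from live data at the insertion point. Hence a temporary of $S_i$, once restored to $\ket{0}$, is not touched again. Any later operation involving it would make it live, contradicting that its lifetime has ended (the Width Bound lemma gives $v\notin L(q)$ for $q>p$).

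Finally, applying the argument to the outermost scope yields the invariant for the whole program. Aliasing is handled by the convention that liveness is a property of the underlying resource, so the argument is carried out per physical register.

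The main obstacle is the third step, the non-interference of restorations whose adjoint segments overlap in time or share inputs. For example, $g^\dagger(x_1,t_2)$ reads $x_1$, which may also be used by inner reclamations. My first attempt will be to note that each adjoint uses shared inputs only as controls, in the same basis state as during the forward computation, so they are unchanged and the segments commute on the temporaries. Where this fails I would restrict to the output-isolable case, where $V_T$ is invertible without altering the reduced state of $L$. The invariant then has to be stated per temporary, relative to the reduced state of live variables, rather than globally.
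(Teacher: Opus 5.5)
Your skeleton matches the paper's: structural induction on scope nesting, the base case taken from the hypothesis, and Nested Scope Independence carrying the inductive step. Your first two steps cover everything the paper argues. The trouble is in what you add around that skeleton.

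In the base case you apply Lifetime Monotonicity backwards. That lemma says the lifetime computed within $B$ is contained in the one computed in $P$. So the local $p_{\max}$ is at most the global one, and a point just past the local lifetime may still be live in $P$. To conclude that a locally chosen reclamation point is globally non-live, you would need the reverse containment. The sentence after that lemma in the paper suggests your reading, but the containment the lemma proves points the other way. The paper's proof does not use monotonicity here: its base case simply ``holds by assumption'', which amounts to reading the hypothesis relative to lifetimes in the whole program. Drop the step or state that reading explicitly.

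The more serious gap is your third step. The claim that ``each adjoint acts only on the unitary history of its own temporary'' is false once temporaries share gates, and the running example already shows it:
\begin{itemize}
\item The histories of $t_1$ and $t_2$ both contain $\mathrm{CX}(t_1,t_2)$. If $t_2$'s adjoint is inserted first, $t_1$'s adjoint re-applies that gate while $t_1$ still holds $f(x_1)$. Restoring $t_1$ therefore leaves $t_2$ in $f(x_1)$.
\item Inverting $t_2$'s full history would also undo the copy-out $\mathrm{CX}(t_2,y_1)$ and corrupt $y_1$.
\end{itemize}
Your appeal to the Width Bound lemma cannot rule this out. Liveness is defined from the operations of $C$, so $v\notin L(q)$ excludes later \emph{program} operations on $v$. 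It says nothing about later inserted adjoints, and those are exactly what can touch a restored temporary.

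Your commutation remedy is fine for shared controls such as $x_1$. The reason, though, is that those gates are block-diagonal in $x_1$'s computational basis, not that $x_1$ sits in a fixed basis state. It does not address temporaries acting on one another.

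The repair is to make the unit of restoration an isolable group $T$ and insert $V_T^\dagger$ once: the joint preparation inverted in reverse order, without the copy-out. For $T=\{t_1,t_2\}$ this is $\mathrm{CX}(t_1,t_2)$, $g^\dagger(x_1,t_2)$, $f^\dagger(x_1,t_1)$, exactly as in the paper's example. Persistence is then argued group-wise. Any variable sharing a gate with $t$ is adjacent to $t$ in $G_E$ from that gate on, so under entanglement-propagated liveness it is live exactly when $t$ is. Hence no adjoint inserted after the group's lifetime ends can involve $t$.

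Where reclamation rests on output-isolability rather than non-liveness, this argument is unavailable. In the example, $t_2$ stays adjacent to the live $y_1$, so there you must work from $U=U_L\circ V_T$ directly. The paper covers your third step only by asserting that no residual entanglement escapes. Your fallback points the right way, provided the invariant is stated per isolable group rather than per temporary.
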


\begin{proof}
Proceed by structural induction on scope nesting. The base case holds by assumption. For the inductive step, Nested Scope Independence ensures that reclamation inside inner scopes does not affect outer variables. Since each scope restores its own temporaries at lifetime boundaries and no residual entanglement escapes, the invariant holds at each enclosing level. \qed
\end{proof}

Compositionality extends to circuit-level metrics. Because reclamation introduces no new cross-scope dependencies, critical-path length and live-set bounds remain stable under structured composition. Early reclamation is therefore not a global optimization phase but a compositional semantic discipline, compatible with modular compilation and separate reasoning about program fragments.


\section{Semantic Consequences}

The formal model of Scope-Bounded Liveness-Guided Uncomputation has consequences beyond circuit-level depth and width. Since lifetime boundaries determine when temporary data must be restored, they also induce higher-level language semantics. In particular, parameter passing behavior and the conditions under which automatic uncomputation remains sound arise directly from semantic lifetime analysis and the Restoration Invariant.

\subsection{Parameter Passing Semantics}

In Qutes, parameter passing is not realized through distinct runtime mechanisms but emerges from lifetime boundaries. The observable effect of a function on its parameters depends on whether their semantic lifetime extends beyond the function body.

By default, Qutes enforces pass-by-value semantics. Consider
\begin{center}
\begin{tabular}{l}
\hline
\texttt{void f(qubit x)\{ ... \}} \\
\hline
\end{tabular}
\end{center}
Modifications applied to $x$ inside the body are provisional unless explicitly declared persistent. Since the semantic lifetime of $x$ as a parameter ends at function exit, the Restoration Invariant requires that its reduced state coincide with its initial state at the call boundary.

\begin{lemma}[Emergent Pass-by-Value]
Let $f$ be a function whose parameters are not declared persistent and whose body satisfies the Restoration Invariant with respect to those parameters. For any call $f(a)$, the observable quantum state of $a$ after the call equals its state before the call.
\end{lemma}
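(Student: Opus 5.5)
The plan is to reduce the statement to the Restoration Invariant applied at the call boundary, using the lifetime discipline already established. First I will fix the semantic model of a call. Write the global state before the call as a density operator $\rho$ on the registers of $a$ together with the rest of the program, $E$. The call $f(a)$ binds the parameter $x$ to the register $a$. Because $x$ is not declared persistent, it is treated as aliasing the same resource $R$ as $a$ only for the duration of the body. By the aliasing convention for liveness, the lifetime of $R$ \emph{as a parameter} ends at the function exit point $p_{\mathrm{exit}}$. The compiled body is therefore a unitary sequence $U_{\mathrm{body}}$ acting on $R$ and on the local temporaries $T$ of $f$. These temporaries are allocated in a fixed initial state $\ket{0}$.

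Next I will invoke the hypothesis that the body satisfies the Restoration Invariant with respect to the parameters and its temporaries. Reading the invariant at $p_{\mathrm{exit}}$, the reduced state of $R$ equals its reduced state at entry, and every temporary in $T$ has been returned to $\ket{0}$. By Compositional Restoration (applied to the scopes nested inside the body), this holds for the whole body and not merely for its innermost blocks. The key intermediate claim is then a factorisation at exit: the joint state on $R\cup T\cup E$ is $\rho'_{RE}\otimes\ket{0}\bra{0}_T$. This follows because a register in a pure state cannot be entangled with anything, so the temporaries decouple. Since $U_{\mathrm{body}}$ does not touch $E$, the reduced state on $E$ is also unchanged.

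The final step is to show that the observable state of $a$ is unchanged. I interpret the \emph{observable state} as the reduced state $\rho_a = \mathrm{Tr}_{E}\,\rho_{aE}$, since this determines every measurement statistic on $a$ alone. I will apply Nested Scope Independence to argue that reclamation inside $f$ creates no new dependence or entanglement edges to $E$. The reduced state of $a$ after the call is then exactly the restored reduced state at $p_{\mathrm{exit}}$, which equals $\rho_a$ before the call.

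I expect the main obstacle to be the gap between equality of reduced states and preservation of the \emph{joint} state with the caller's other registers. As an example, the body could apply a unitary $V$ to $a$ with $V\rho_aV^\dagger=\rho_a$ while changing correlations between $a$ and $E$. My first attempt to close this gap is to strengthen the reading of the Restoration Invariant: the composite effect on $R$ restricted to $p_{\mathrm{exit}}$ must be the identity channel. This matches how the compiler enforces the invariant, namely by inserting adjoints of the unitary history on $R$ at the safe reclamation point. The net unitary on $R\cup T$ is then the identity on the subspace with $T$ in $\ket{0}$, which preserves $\rho_{aE}$ entirely. If that reading is not granted, I will fall back to stating the lemma only for the reduced state of $a$, which the argument above already proves.
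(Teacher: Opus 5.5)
Your proposal is correct and follows essentially the paper's route: because a non-persistent parameter's lifetime ends at function exit, the Restoration Invariant hypothesis directly forces the reduced state of $a$ at exit to equal its pre-call state, and that is the whole of the paper's proof. The factorisation over the temporaries and the appeals to Compositional Restoration and Nested Scope Independence are unnecessary, and the end of $x$'s lifetime should be credited to non-persistence rather than the aliasing convention, which would keep $R$ live if the caller uses $a$ later; your remark that equal reduced states need not preserve correlations with $E$ is a genuine subtlety that the paper's one-line proof does not address.
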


\begin{proof}
Within $f$, transformations on $x$ must either be propagated through returned values or be undone before the end of its semantic lifetime. Since parameters are not live beyond function exit unless explicitly returned or marked persistent, the Restoration Invariant ensures restoration at the boundary. \qed
\end{proof}

This behavior does not rely on copying quantum data, which would violate no-cloning, but on reversibility: internal modifications must vanish at the lifetime boundary.

Explicit pass-by-reference semantics is obtained by extending lifetime. With a \texttt{ref} annotation,
\begin{center}
\begin{tabular}{l}
\hline
\texttt{void g(ref qubit x)\{ ... \}} \\
\hline
\end{tabular}
\end{center}
the parameter remains live beyond the function boundary and is not subject to restoration at exit. Consequently, modifications persist and the observable effect matches pass-by-reference.

The distinction is purely lifetime-based: in pass-by-value mode, a parameter’s semantic lifetime ends at the function boundary; in pass-by-reference mode, it extends into the caller’s context.

For instance, the running example may be written as
\[
\texttt{void compute(}x_1,x_2,\texttt{ ref } y_1,\texttt{ ref } y_2\texttt{)},
\]
where $x_1,x_2$ are passed by value and $y_1,y_2$ by reference.

Thus parameter passing in Qutes is not a separate execution model but a direct manifestation of lifetime-guided uncomputation: persistence across boundaries must be explicit, and restoration is enforced otherwise.

\subsection{Semantic Constraints}

Lifetime-guided uncomputation is bounded by semantic and physical constraints arising from the Restoration Invariant and the principles of quantum computation. These constraints characterize precisely when automatic reclamation would violate reversibility or alter observable behavior.

A first constraint concerns irreversible operations. The formal model assumes unitary evolution, whereas measurement is inherently non-unitary and transfers information to the classical domain.

\begin{lemma}[Irreversibility Constraint]
Let $v$ be measured at program point $p$, and suppose the measurement outcome influences subsequent computation. Then $v$ cannot satisfy the Restoration Invariant beyond $p$.
\end{lemma}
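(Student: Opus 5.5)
The plan is a proof by contradiction that plays the Restoration Invariant against the non-invertibility of measurement. Suppose $v$ satisfies the Restoration Invariant at some point $q \ge p$. Then some compiler-inserted segment $R$, placed at a safe reclamation point $q$, returns $v$ to its initial state. By the Definition of Safe Reclamation Point, $R$ is the adjoint of the unitary history applied to $v$. So restoration amounts to the claim that the composite map $R \circ \mathcal{M}_p \circ U_{<p}$ acts as the identity on $v$. Here $\mathcal{M}_p$ is the measurement channel at $p$, and $U_{<p}$ is the prefix acting on $v$. I would also require that this composite leaves the reduced state of every live variable unchanged, which is the same condition used in output-isolability.

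I would then split into two cases, according to whether the measurement at $p$ is part of the history that must be undone.

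\emph{Case 1.} If $R$ is meant to invert $\mathcal{M}_p$ itself, the argument is structural. The measurement channel $\rho \mapsto \sum_m P_m \rho P_m$, with the outcome recorded classically, is not injective on density operators: distinct pre-measurement superpositions with equal outcome statistics yield the same post-measurement ensemble. No unitary, and in particular no adjoint of a unitary history, can therefore restore the pre-measurement state for all inputs.

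\emph{Case 2.} Suppose instead that restoration targets only the initial state $\ket{0}$ of $v$, for example by a reset conditioned on the outcome. Here I would use the hypothesis that the outcome $m$ influences subsequent computation. The classical value $m$ has been copied into the control of later operations, so by the aliasing convention the classical record, or the registers it controls, is a live dependent of $v$'s history. Returning $v$ to $\ket{0}$ therefore does not erase the information carried by $v$: that information persists in $m$ and in the downstream state. The second clause of the Safe Reclamation Point definition ("no irreversible operation applied to $v$ has effects that persist beyond $p$") then fails. Hence no $q \ge p$ is a safe reclamation point, and the Restoration Invariant, which requires restoration at such a point, cannot hold.

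The main obstacle is Case 2. With only the definitions given, the Restoration Invariant is never formally stated. One might argue that a classically conditioned reset trivially puts $v$ back into $\ket{0}$. I would address this by fixing the reading the paper uses in the Emergent Pass-by-Value lemma: restoration means that the temporary information vanishes, i.e. that the global state factors as $\ket{0}_v \otimes \sigma$ with $\sigma$ independent of $v$'s intermediate values. This condition fails exactly when $\sigma$ depends on $m$. That dependence is precisely the lemma's hypothesis that the outcome influences later computation. Landauer's principle, cited in the introduction, gives the intuition for why a reset cannot count as reversible restoration.
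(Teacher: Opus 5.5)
Your argument is correct at the paper's level of rigor, but it goes further than the paper's. The paper's proof is one step: measurement is non-unitary and its outcome is no longer coherently encoded, so no unitary segment can restore $v$ while preserving observable effects. Your Case~1 (non-injectivity of the measurement channel) is a sharper form of exactly this.

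Your Case~2 is a genuinely different ingredient. It combines the factorization reading $\ket{0}_v\otimes\sigma$, with $\sigma$ independent of $v$'s intermediate values, and the irreversibility clause of the Safe Reclamation Point definition. It is also the case that actually carries the lemma. After a computational-basis measurement, $v$ sits unentangled in the known state $\ket{m}$. An $X$ conditioned on $m$ therefore returns it to $\ket{0}$ without disturbing $x$, the record $m$, or later computation; when $v$ held $f(x)$, the adjoint $U_f^\dagger$ itself does the same. So non-unitarity alone, which is what the paper's proof relies on, does not establish the statement. Your condition, by contrast, still fails because $m$ persists. The abstract's description of the invariant (temporary information must \emph{disappear}) supports your reading better than the pass-by-value lemma you cite.

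The cost of your route is that it rests on an interpretation of an invariant the paper never formalizes, and through the Safe Reclamation Point clause it becomes nearly definitional. The paper's route is shorter but silently takes ``restore'' to mean ``undo unitarily''.

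Two small repairs. First, the aliasing convention concerns several identifiers naming one quantum register; it does not make a classical record an alias of $v$. Drop that appeal, since the persistence of information in $m$ follows directly from the hypothesis. Second, your claim that the condition ``fails exactly when $\sigma$ depends on $m$'' needs $m$ to be non-constant. If $v$ is in a fixed basis state when measured (e.g.\ never touched), the outcome is fixed and the lemma is false. The paper makes the same assumption tacitly via ``probabilistic outcomes''.
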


\begin{proof}
Measurement induces a non-unitary transformation and extracts classical information. Since the post-measurement state depends on probabilistic outcomes that are no longer coherently encoded, no unitary segment can restore $v$ while preserving observable effects. \qed
\end{proof}

A second constraint concerns intentionally persistent entanglement. If correlations between input and output variables constitute the intended result of the computation, enforcing restoration of the input would destroy the desired joint state and alter semantics.

Aliasing introduces a further restriction. Liveness must be computed at the level of physical resources rather than syntactic identifiers.

\begin{lemma}[Aliasing Constraint]
Let $v$ and $w$ reference the same quantum register $R$. If one alias becomes non-live while another remains live, $R$ must be considered live. Reclamation is permitted only when $R$ itself is no longer live.
\end{lemma}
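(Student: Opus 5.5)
The argument will rest almost entirely on the resource-level convention stated after the definition of Semantic Lifetime: if $v_1,\dots,v_k$ alias the same physical register $R$, then $\mathrm{live}(R,p)$ holds iff $\exists i\,\mathrm{live}(v_i,p)$. The plan is to take this as the governing definition and show two things: the liveness claim follows immediately, and the reclamation restriction follows from the Safe Reclamation Point definition applied to $R$ rather than to an identifier.

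\emph{Step 1 (liveness of $R$).} Suppose at point $p$ that $\neg\mathrm{live}(v,p)$ but $\mathrm{live}(w,p)$. Taking the witness $i$ corresponding to $w$ in the existential gives $\mathrm{live}(R,p)$ directly. I will also note why this convention is the only sound one. Any future operation $U_j$, $j>p$, in which $w$ participates acts on the physical qubits of $R$, so condition (1) of the Liveness definition is met for the resource. Likewise, any entanglement edge $(w,u)\in E_E(p)$ is an edge incident to $R$, which gives condition (2). So the semantic lifetime of $R$ contains the union of the alias lifetimes, as stated.

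\emph{Step 2 (reclamation only when $R$ is non-live).} The adjoint of the unitary history inserted at a safe reclamation point acts physically on $R$. Suppose it were inserted at a point $p$ where $\mathrm{live}(R,p)$ holds. Then there are two possibilities. Either some later $U_j$ reads $R$ in its restored state $\ket{0}$ rather than the state the forward program prescribes, which changes observable behaviour. Or $R$ is entangled with a live variable $u$, so undoing $R$'s history changes the reduced state of $u$. In both cases restoration conflicts with the Restoration Invariant's requirement that only semantically irrelevant data vanish. Hence $p$ is not a safe reclamation point for $R$. Conversely, once $\neg\mathrm{live}(R,p)$ holds and no irreversible operation persists (cf.\ the Irreversibility Constraint), the Safe Reclamation Point definition applies to $R$ as a single resource. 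This establishes the "only when" direction.

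The main obstacle is conceptual rather than technical. One must justify that an identifier-level liveness judgment on $v$ does not license a physical action on $R$. I will handle this by stressing that the adjoint segment's support is the set of physical qubits, not a set of names. I will then invoke Nested Scope Independence only when $R$ itself is non-live, since its proof assumes that the reclaimed qubits are disentangled from all live data.
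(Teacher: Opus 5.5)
Your proposal is correct and takes essentially the paper's approach: restoration acts on the shared physical register $R$, a live alias forces $R$'s state to persist, and so reclamation is allowed only when no alias is live. You make this explicit through the resource-level convention $\mathrm{live}(R,p)\iff\exists i\,\mathrm{live}(v_i,p)$ and the Safe Reclamation Point definition. One caution: your entanglement sub-case overstates matters, because by the paper's own Remark and Cross-Boundary Entanglement Constraint an output-isolable correlation can be undone without changing $u$'s reduced state. This does not break your argument, since the ``only when'' direction already follows directly from the Safe Reclamation Point definition applied to $R$.
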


\begin{proof}
Since $v$ and $w$ denote the same physical resource, restoring one requires restoring $R$. If an alias remains live, its state must persist. Hence reclamation is allowed only when no alias of $R$ is live. \qed
\end{proof}

These constraints are unified by semantic lifetime: measurement introduces irreversible effects; persistent entanglement extends lifetime; aliasing prevents premature reclamation at the identifier level. Automatic uncomputation is therefore restricted to variables whose evolution remains unitary and whose lifetime genuinely terminates.

A final constraint concerns entanglement across lifetime boundaries. A temporary variable $x$ may interact with a variable $y$ whose lifetime extends beyond a boundary $B$. Restoration at $B$ is sound only if the induced correlations can be semantically isolated.

\begin{lemma}[Cross-Boundary Entanglement Constraint]
Let $x$ have semantic lifetime ending at boundary $B$, and let $y$ remain live beyond $B$. If prior operations entangle $x$ and $y$, restoration of $x$ at $B$ is possible if and only if the subcircuit involving $x$ is output-isolable with respect to $y$.
\end{lemma}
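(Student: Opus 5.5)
The claim is an equivalence, so I will prove the two directions separately. Both rest on the definition of output-isolability, with $L=\{y\}$ (together with any other variables live beyond $B$) and $T$ the set of temporaries containing $x$. Throughout, ``restoration of $x$ at $B$'' means inserting, at $B$, some unitary on the temporaries that returns $x$ to its initial state without changing the reduced state of $y$. This is exactly what the Restoration Invariant and the notion of a safe reclamation point require.

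\emph{Sufficiency.} Assume the subcircuit $U$ between the allocation of $x$ and $B$ is output-isolable, i.e.\ $U = U_L \circ V_T$ with $V_T$ acting only on $T$ and invertible without altering the reduced state of $L$. Insert $V_T^\dagger$ at $B$. Since $V_T^\dagger$ acts only on $T$, its effect on $y$ is confined to the reduced state of $L$, and isolability guarantees that this reduced state is left unchanged. Meanwhile $V_T^\dagger V_T = I$ on $T$, so $x$ returns to its initial state, typically $\ket{0}$. After this step $x$ is non-live and carries no persistent irreversible effects, so $B$ is a safe reclamation point. Nested Scope Independence then shows that no other variable's lifetime changes. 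The running example, where $(t_1,t_2)$ are uncomputed after $p_4$, is a concrete instance of this direction.

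\emph{Necessity.} Assume some unitary $W$, inserted at $B$ and acting only on the temporaries, restores $x$ while preserving the reduced state of $y$. Then set $V_T := W^{-1}$ restricted to $T$, and $U_L := U \circ W$, which (given that $W$ restores $T$) acts nontrivially only on $L$. This gives $U = U_L \circ V_T$, and $V_T$ is invertible via $W$ without changing the reduced state of $L$ by hypothesis. Hence $U$ is output-isolable.

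I expect the main obstacle to be the claim that $U\circ W$ acts only on $L$. This requires the joint state after $W$ to be a product of the restored temporaries with the state of $L$. To establish that, I would argue that $x$ is returned to a fixed pure state for every input. A pure marginal on $x$ forces the global state to factor as $\ket{0}_x \otimes \ket{\psi}$, so no correlation with $y$ survives. Consequently the restoring map composed with $U$ factors through $L$ alone.

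I would also point out that if restoration were attempted without isolability, the entanglement between $x$ and $y$ would force any unitary acting on $x$ alone to either leave $x$ mixed or change the reduced state of $y$ (a no-hiding-style argument). This matches the necessity direction and explains why the compiler has to disable reclamation whenever isolability cannot be certified.
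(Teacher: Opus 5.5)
Your sufficiency direction is the paper's argument: insert $V_T^\dagger$ at $B$. It commutes past $U_L$ because the supports are disjoint, so $T$ returns to its initial state and $\rho_y$ is untouched. For necessity the paper only asserts the contrapositive. Your constructive route there is different, and it is where the proof breaks.

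A minor slip first. $W$ is inserted \emph{after} $U$, so the operator your hypothesis controls is $W\circ U$, not $U\circ W$. The identity $U=(U\circ W)\circ W^{-1}$ is a tautology and carries no information.

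The real gap is the inference ``$x$ has a pure marginal, hence $W\circ U$ acts only on $L$''. Per input you only get $W U(\ket{0}_T\ket{\phi})=\ket{0}_x\otimes\ket{\psi_\phi}$. Here $\ket{\psi_\phi}$ still lives on the other temporaries and on variables outside $L\cup T$, such as $x_1$ in the running example. Concretely, take $T=\{x,t'\}$, $L=\{y\}$, and let $U$ be $H$ on $x$ followed by $\mathrm{CNOT}(x,y)$. Then $W=\mathrm{SWAP}(x,t')$ acts only on temporaries, returns $x$ to $\ket{0}$ and leaves $\rho_y$ unchanged, so it is a restoration in your sense. Yet $W\circ U$ entangles $t'$ with $y$. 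Moreover $U$ admits no decomposition $U_L\circ V_T$, since it maps $\ket{0}_x\ket{0}_{t'}\ket{0}_y$ to a state entangled across $T$ and $y$.

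Requiring all of $T$ to be restored is still not enough. Per-input restoration only yields an identity on the initialized subspace $\ket{0}_T\otimes\mathcal{H}_{\bar{T}}$, while output-isolability is stated as an operator identity. For example, $U=\mathrm{CNOT}$ with control $x$ (starting in $\ket{0}$) and target $y$ creates an edge in $G_E$ and is restored by $W=I$. But it is not a product of a unitary on $y$ with a unitary on $x$.

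To make necessity rigorous you must:
\begin{itemize}
\item[(i)] define restoration as $W\circ U$ returning the whole of $T$ to $\ket{0}_T$ for every input; and
\item[(ii)] read isolability on the initialized subspace, with $U_L$ allowed to act on all non-temporaries.
\end{itemize}
Linearity then gives a unitary $A$ on $\bar{T}$ with $U(\ket{0}_T\ket{\phi})=W^{-1}\ket{0}_T\otimes A\ket{\phi}$. This is the required form with $V_T=W^{-1}$.

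The repair also exposes a limitation. A unitary supported on $T$ cannot change entanglement across the cut between $T$ and $\bar{T}$. So such a restoration exists only if $T$ was already in product with $y$ after $U$. The lemma's entanglement hypothesis can therefore only refer to the static, conservative edges of $G_E$. Your closing no-hiding remark is just this invariance: a unitary on $x$ alone never changes $\rho_y$, so the only way it can fail is by leaving $x$ mixed.
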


\begin{proof}
If output-isolability holds, the subcircuit admits a decomposition $U = U_y \circ V_x$, where $V_x$ acts only on $x$ and can be inverted without modifying the reduced state of $y$. Inserting $V_x^\dagger$ restores $x$ while preserving the observable state of $y$. Conversely, if no such decomposition exists, any inverse restoring $x$ necessarily modifies the joint state and alters the reduced density matrix of $y$, violating the Restoration Invariant. \qed
\end{proof}

Thus entanglement alone does not preclude reclamation; only correlations that cannot be isolated at the lifetime boundary prevent restoration. When output-isolability fails, automatic uncomputation cannot be guaranteed and restoration must be handled explicitly. Lifetime-guided uncomputation is therefore precisely delimited by the formal boundaries of reversible reasoning.


\section{Conclusions}

We introduced Scope-Bounded Liveness-Guided Uncomputation as a semantic discipline for quantum program compilation in Qutes. Instead of appending a global cleanup phase, uncomputation is governed by semantic lifetime and a Restoration Invariant that determines when temporary quantum data must be restored. Lifetime is defined at the level of physical resources, ensuring sound handling of aliasing.

Early reclamation is locally decidable and compositional under nested scopes. At the circuit level, it preserves worst-case asymptotics while enabling structural improvements: adjoint segments may be scheduled off the critical path, reducing depth, and temporary registers may be reclaimed early, reducing peak live width. Resource usage is thus determined by the live set rather than lexical allocation.

We characterized the semantic boundary of automatic reclamation through output-isolability: entanglement alone does not prevent restoration, but correlations that cannot be semantically isolated do. When isolability cannot be certified, restoration guarantees no longer apply.
Parameter passing semantics follows from the same lifetime discipline, with pass-by-value and pass-by-reference corresponding to distinct restoration policies.

Uncomputation in Qutes is therefore treated as a structural semantic principle integrating lifetime analysis, entanglement structure, and reversible resource management. Future work includes refined analyses for isolability and interprocedural lifetime verification.

\bibliographystyle{plain} 
\bibliography{bibliography.bib}

@inproceedings{grover,
  title={A fast quantum mechanical algorithm for database search},
  author={Grover, Lov K},
  booktitle={Proceedings of the twenty-eighth annual ACM Symposium on Theory of Computing},
  pages={212--219},
  year={1996},
}

@inproceedings{Quipper,
  title={Quipper: a scalable quantum programming language},
  author={Green, Alexander S and Lumsdaine, Peter LeFanu and Ross, Neil J and Selinger, Peter and Valiron, Beno{\^\i}t},
  booktitle={Proceedings of the 34th ACM SIGPLAN Conference on Programming Language Design and Implementation},
  year={2013},
}

@inproceedings{silq,
  title={Silq: A High-Level Quantum Language with Safe Uncomputation and Intuitive Semantics},
  author={Bichsel, Benjamin and Zhan, Daniel and Sutter, Goran and Vechev, Martin},
  booktitle={Proceedings of the 41st ACM SIGPLAN Conference on Programming Language Design and Implementation (PLDI)},
  doi={10.1145/3385412.3386007},
  year={2020},
}

@inproceedings{qsharp,
  title={{Q\#: Enabling Scalable Quantum Computing and Development with a High-level DSL}},
  booktitle={Proceedings of the Real World Domain Specific Languages Workshop 2018},
  author={Svore, Krysta and Geller, Alan and Troyer, Matthias and Azariah, John and Granade, Christopher and Heim, Bettina and Kliuchnikov, Vadym and Mykhailova, Mariia and Paz, Andres and Roetteler, Martin},
  doi={10.1145/3183895.3183901},
  year={2018},
}

@inproceedings{FaroMM25,
  author       = {Simone Faro and
                  Francesco Pio Marino and
                  Gabriele Messina},
  title        = {Qutes: {A} High-Level Quantum Programming Language for Simplified
                  Quantum Computing},
  booktitle    = {Proceedings of the 34th International Symposium on High-Performance
                  Parallel and Distributed Computing, {HPDC} 2025, University of Notre
                  Dame, Notre Dame, IN, USA},
  pages        = {47:1--47:9},
  publisher    = {{ACM}},
  year         = {2025},
  url          = {https://doi.org/10.1145/3731545.3744153},
  doi          = {10.1145/3731545.3744153},
  bibsource    = {dblp computer science bibliography, https://dblp.org}
}

@article{Qutes_Journal,
    author = {Faro, Simone and Marino, Francesco Pio and Messina, Gabriele},
    title = {Extending Qutes: a practical high-level language for quantum computing},
    journal = {The Computer Journal},
    pages = {bxaf133},
    year = {2025},
    month = {11},
    issn = {0010-4620},
    doi = {10.1093/comjnl/bxaf133},
    url = {https://doi.org/10.1093/comjnl/bxaf133},
    eprint = {https://academic.oup.com/comjnl/advance-article-pdf/doi/10.1093/comjnl/bxaf133/65613716/bxaf133.pdf},
}

@book{NielsenChuang,
  author    = {Michael A. Nielsen and Isaac L. Chuang},
  title     = {Quantum Computation and Quantum Information},
  publisher = {Cambridge University Press},
  edition   = {10th Anniversary Edition},
  year      = {2010},
  isbn      = {9781107002173},
  doi       = {10.1017/CBO9780511976667}
}

@article{landauer1961,
  author  = {Landauer, Rolf},
  title   = {Irreversibility and Heat Generation in the Computing Process},
  journal = {IBM Journal of Research and Development},
  volume  = {5},
  number  = {3},
  pages   = {183--191},
  year    = {1961},
  doi     = {10.1147/rd.53.0183}
}

@article{bennett1973,
  author  = {Bennett, Charles H.},
  title   = {Logical Reversibility of Computation},
  journal = {IBM Journal of Research and Development},
  volume  = {17},
  number  = {6},
  pages   = {525--532},
  year    = {1973},
  doi     = {10.1147/rd.176.0525}
}

@article{lecerf1963,
  author  = {Lecerf, Yves},
  title   = {Machines de Turing r{\'e}versibles. R{\'e}cursive insolubilit{\'e} en $n \in \mathbb{N}$ de l'{\'e}quation $u=\theta^n u$, o{\`u} $\theta$ est un ``isomorphisme de codes''},
  journal = {Comptes Rendus Hebdomadaires des S{\'e}ances de l'Acad{\'e}mie des Sciences},
  volume  = {257},
  pages   = {2597--2600},
  year    = {1963}
}

@techreport{toffoli1980,
  author      = {Toffoli, Tommaso},
  title       = {Reversible Computing},
  institution = {MIT Laboratory for Computer Science},
  type        = {Technical Memo},
  number      = {MIT-LCS-TM-151},
  year        = {1980},
  url         = {https://publications.csail.mit.edu/lcs/pubs/pdf/MIT-LCS-TM-151.pdf}
}

@inproceedings{unqomp,
  author    = {Paradis, Anouk and Bichsel, Benjamin and Steffen, Samuel and Vechev, Martin},
  title     = {Unqomp: Synthesizing Uncomputation in Quantum Circuits},
  booktitle = {Proceedings of the 42nd ACM SIGPLAN International Conference on Programming Language Design and Implementation (PLDI '21)},
  year      = {2021},
  pages     = {222--236},
  publisher = {Association for Computing Machinery},
  doi       = {10.1145/3453483.3454040}
}

@article{reqomp,
  author  = {Paradis, Anouk and Bichsel, Benjamin and Vechev, Martin},
  title   = {Reqomp: Space-constrained Uncomputation for Quantum Circuits},
  journal = {Quantum},
  volume  = {8},
  pages   = {1258},
  year    = {2024},
  month   = feb,
  doi     = {10.22331/q-2024-02-19-1258},
  url     = {https://doi.org/10.22331/q-2024-02-19-1258}
}

@misc{venev2024modular,
  author       = {Venev, Hristo and Gehr, Timon and Dimitrov, Dimitar and Vechev, Martin},
  title        = {Modular Synthesis of Efficient Quantum Uncomputation},
  howpublished = {arXiv preprint},
  eprint       = {2406.14227},
  primaryClass = {quant-ph},
  year         = {2024},
  url          = {https://arxiv.org/abs/2406.14227}
}

\end{document}